\documentclass[11pt,letterpaper]{article}

\usepackage[utf8]{inputenc}
\usepackage[T1]{fontenc}
\usepackage{lmodern}
\usepackage[DIV=11]{typearea} 

\usepackage{microtype}

\usepackage{amssymb}
\usepackage{amsmath}
\usepackage{amsthm}
\usepackage{thmtools}
\usepackage{thm-restate}

\usepackage[procnumbered,ruled,vlined,linesnumbered]{algorithm2e}

\usepackage{xcolor}
\usepackage{xspace}
\usepackage{xfrac}

\usepackage{comment}

\usepackage[backend=biber, style=alphabetic, backref=true, doi=false, url=false, maxcitenames=3, mincitenames=3, maxbibnames=10, minbibnames=10, sortlocale=en_US]{biblatex}

\usepackage[ocgcolorlinks]{hyperref} 
\usepackage{cleveref}

\addbibresource{DecrSCC.bbl}

\DeclareUnicodeCharacter{221A}{$\sqrt{}$}

\colorlet{DarkRed}{red!50!black}
\colorlet{DarkGreen}{green!50!black}
\colorlet{DarkBlue}{blue!50!black}

\hypersetup{
	linkcolor = DarkRed,
	citecolor = DarkGreen,
	urlcolor = DarkBlue,
	bookmarksnumbered = true,
	linktocpage = true
}

\declaretheorem[numberwithin=section]{theorem}
\declaretheorem[numberlike=theorem]{lemma}

\declaretheorem[numberlike=theorem]{corollary}
\declaretheorem[numberlike=theorem]{definition}
\declaretheorem[numberlike=theorem]{claim}

\DontPrintSemicolon
\SetKw{KwAnd}{and}
\SetProcNameSty{textsc}
\SetFuncSty{textsc}

\title{Decremental Strongly-Connected Components and Single-Source Reachability in Near-Linear Time}
\author{
Aaron Bernstein\thanks{Rutgers University New Brunswick, Department of Computer Science}
\and
Maximilian Probst\thanks{BARC, University of Copenhagen, Universitetsparken 1, Copenhagen 2100, Denmark, The author is supported by Basic Algorithms Research Copenhagen (BARC), supported by Thorup's Investigator Grant from the Villum Foundation under Grant No. 16582.}
\and
Christian Wulff-Nilsen\thanks{Department of Computer Science, University of Copenhagen. This research is supported by the Starting Grant 7027-00050B from the Independent Research Fund Denmark under the Sapere Aude research career programme.}
}

\date{\today}

\hypersetup{
	pdftitle = {Linear time algorithm for maintaining SCCs},
	pdfauthor = {Aaron Bernstein, Maximilian Probst, Christian Wulff-Nilsen}
}

\begin{document}
\maketitle
\begin{abstract}
Computing the Strongly-Connected Components (SCCs) in a graph $G=(V,E)$ is known to take only $O(m + n)$ time using an algorithm by Tarjan from 1972[SICOMP 72] where $m = |E|$, $n=|V|$. For fully-dynamic graphs, conditional lower bounds provide evidence that the update time cannot be improved by polynomial factors over recomputing the SCCs from scratch after every update. Nevertheless, substantial progress has been made to find algorithms with fast update time for \emph{decremental} graphs, i.e. graphs that undergo edge deletions.

In this paper, we present the first algorithm for general decremental graphs that maintains the SCCs in total update time $\tilde{O}(m)$\footnote{We use $\tilde{O}(f(n))$ notation to suppress logarithmic factors, i.e. $g(n) = \tilde{O}(f(n))$ if $g(n) = O(f(n) \text{polylog}(n))$.}, thus only a polylogarithmic factor from the optimal running time. Previously such a result was only known for the special case of planar graphs [Italiano et al, STOC 17]. Our result should be compared to the formerly best algorithm for general graphs achieving $\tilde{O}(m\sqrt{n})$ total update time by Chechik et al. [FOCS 16] which improved upon a breakthrough result leading to $O(mn^{0.9 + o(1)})$ total update time by Henzinger, Krinninger and Nanongkai [STOC 14, ICALP 15]; these results in turn improved upon the longstanding bound of $O(mn)$ by Roditty and Zwick [STOC 04]. 

All of the above results also apply to the decremental Single-Source Reachability (SSR) problem which can be reduced to decrementally maintaining SCCs. A bound of $O(mn)$ total update time for decremental SSR was established already in 1981 by Even and Shiloach [JACM 81].

Using a well known reduction, we use our decremental result to achieve new update/query-time trade-offs in the fully dynamic setting. We can maintain the  reachability of pairs $S \times V$, $S \subseteq V$ in fully-dynamic graphs with update time $\tilde{O}(\frac{|S|m}{t})$ and query time $O(t)$ for all $t \in [1,|S|]$; this matches the best All-Pairs Reachability algorithm for $S = V$ [Łącki, TALG 13].
\end{abstract}
\newpage

\section{Introduction}

For a directed graph $G=(V,E)$, with $n = |V|, m=|E|$, the Strongly-Connected Components (SCCs) of $G$ are the sets of the unique partition of the vertex set $V$ into sets $V_1, V_2, .. , V_k$ such that for any two vertices $u \in V_i, v \in V_j$, there exists a directed cycle in $G$ containing $u$ and $v$ if and only if $i=j$. In the Single-Source Reachability (SSR) problem, we are given a distinguished source $r \in V$ and are asked to find all vertices in $V$ that can be reached from $r$. The SSR problem can be reduced to finding the SCCs by inserting edges from each vertex in $V$ to the distinguished source $r$.

Finding SCCs in static graphs in $O(m+n)$ time is well-known since 1972\cite{tarjan1972depth} and is commonly taught in undergraduate courses, also appearing in CLRS\cite{cormen2009introduction}. 

In this paper we focus on maintaining SCCs in a dynamic graph. The most general setting is the fully dynamic one, where edges are being inserted and deleted into the graph. While many connectivity problems for undirected graphs have been solved quite efficiently \cite{holm2001poly, wulff2013faster, thorup2000near, thorup2007fully, huang2017fully, nanongkai2017dynamic}, in fully-dynamic graphs, the directed versions of these problems have proven to be much harder to approach. 

In fact, Abboud and Vassilevska\cite{abboud2014popular} showed that any algorithm that can maintain whether there are more than 2 SCCs in a fully-dynamic graph with update time $O(m^{1-\epsilon})$ and query time $O(m^{1-\epsilon})$, for any constant $\epsilon > 0$, would imply a major breakthrough on SETH. The same paper also suggests that $O(m^{1-\epsilon})$ update time and query time $O(n^{1-\epsilon})$ for maintaining the number of reachable vertices from a fixed source would imply a breakthrough for combinatorial Matrix Multiplication. 

For this reason, research on dynamic SCC and dynamic single-source reachability has focused on the partially dynamic setting (decremental or incremental). In this paper we study the \emph{decremental} setting, where the original graph only undergoes edge deletions (no insertions). We note that both lower bounds above extend to decremental algorithms with \emph{worst-case} update time $O(m^{1-\epsilon})$, so all existing results focus on the amortized update time.

The first algorithm to maintain SSR faster than recomputation from scratch achieved total update time $O(mn)$\cite{shiloach1981line}. The same update time for maintaining SCCs was achieved by a randomized algorithm by Roddity and Zwick\cite{roditty2008improved}. Their algorithm also establishes that any algorithm for maintaining SSR can be turned into a randomized algorithm to maintain the SCCs incurring only an additional constant multiplicative factor in the running time. Later, Łącki\cite{lkacki2013improved} presented a simple deterministic algorithm that matches $O(mn)$ total update time and that also maintains the transitive closure. 
For several decades, it was not known how to get beyond total update time $O(mn)$, until a recent breakthrough by Henzinger, Krinninger and Nanongkai\cite{henzinger2014sublinear, henzinger2015improved} reduced the total update time to expected time $O(\min(m^{7/6}n^{2/3}, m^{3/4}n^{5/4+o(1)}, m^{2/3}n^{4/3+o(1)} + m^{3/7}n^{12/7+o(1)})) = O(mn^{0.9+o(1)})$. Even more recently, Chechik et. al.\cite{chechik2016decremental} showed that a clever combination of the algorithms of Roditty and Zwick, and Łącki can be used to improve the expected total update time to $\tilde{O}(m \sqrt{n})$. We point out that all of these recent results rely on randomization and in fact no deterministic algorithm for maintaining SCCs or SSR beyond the $O(mn)$ bound is known for general graphs. For planar graphs, Italiano et. al.\cite{italiano2017decremental} presented a deterministic algorithm with total update time $\tilde{O}(n)$. 

Finally, in this paper, we present the first algorithm for general graphs to maintain SCCs in $\tilde{O}(m)$ expected total update time with constant query time, thus presenting the first near-optimal algorithm for the problem. We summarize our result in the following theorem.

\begin{theorem}
\label{thm:SCCmain}
    Given a graph $G=(V,E)$ with $m$ edges and $n$ vertices, we can maintain a data structure that supports the operations:
    \begin{itemize}
        \item $\textsc{Delete}(u,v)$: Deletes the edge $(u,v)$ from the graph $G$,
        \item $\textsc{Query}(u,v)$: Returns whether $u$ and $v$ are in the same SCC in $G$,
    \end{itemize}
    in total expected update time $O(m \log^4 n)$ and with worst-case constant query time. The same time bounds apply to answer for a fixed source vertex $s \in V$ queries on whether a vertex $v \in V$ can be reached from $s$. The bound holds against an oblivious adaptive adversary.
\end{theorem}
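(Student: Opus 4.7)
The plan is to reduce decremental SCC to decremental single-source reachability via the Roditty--Zwick reduction (which costs only a constant factor in randomized settings), and then to design a near-linear time algorithm for SSR using a recursive decomposition that combines shallow Even--Shiloach trees with random center sampling. The goal at each recursion level is to spend only $\tilde{O}(m)$ work and to have only $O(\log n)$ levels, so that the overall update time matches the theorem.

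First, at each recursion level I would sample $O(\log n)$ random \emph{centers} per currently-maintained SCC piece $H$ and maintain, rooted at each center, both an outgoing and an incoming Even--Shiloach tree of depth $d = O(\log n)$ inside $H$. The cost of maintaining all these shallow trees is $\tilde{O}(m_H)$, where $m_H$ is the number of edges in the piece, and two vertices $u,v \in H$ are certified to lie in the same SCC iff some sampled center $c$ satisfies $c \to u$, $c \to v$, $u \to c$, $v \to c$ within depth $d$. Queries thus take $O(1)$ time via a simple lookup. When an edge deletion causes a vertex to drop out of every center's bounded-depth region, we either discover that it has genuinely left the SCC (and recurse on the \emph{smaller} resulting piece, charging work to its size) or resample new centers locally inside a still-large SCC. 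A charging argument in the style of heavy-path decomposition, exploiting that we always recurse on the smaller side of an SCC split, ensures each edge participates in at most $O(\log n)$ recursion levels.

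The main obstacle will be the case where an SCC is large but has diameter greater than $d$: plain random center sampling does not automatically cover such an SCC, and we risk missing that two distant vertices remain strongly connected. My plan to handle this is to exhibit \emph{path separators} --- small vertex sets that lie on a constant fraction of the long internal cycles of the SCC --- and to prove that such separators are found by an appropriately weighted random sample with high probability, so that after "cutting" at them every residual piece has small effective diameter and falls back under the shallow ES-tree regime. Proving the concentration bound for this sampling, and ensuring that the random choices remain hidden from an oblivious adaptive adversary (so that queries reveal only the \emph{existence} of a certifying center, not its identity, and thus cannot be exploited by targeted future deletions), is the most delicate step. Combining the $O(\log n)$ recursion depth, the $O(\log n)$-depth trees, and the $O(\log n)$ sampled centers per piece yields the claimed $O(m \log^4 n)$ total expected update time.
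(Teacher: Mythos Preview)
Your proposal has a genuine gap at exactly the point you flag as ``the main obstacle.'' With $O(\log n)$ random centers each maintaining a depth-$O(\log n)$ ES-tree, the union of the balls covers only $\mathrm{polylog}(n)$ layers' worth of vertices; a strongly-connected piece can have linear size and polynomial diameter, and then no sampled center certifies that two distant vertices in it are in the same SCC. Your proposed fix---find ``path separators'' by weighted random sampling so that residual pieces have small diameter---does not work as stated: once you remove a separator set $S$ to cut diameter down, you still need to track how the edges incident to $S$ re-meld the resulting pieces, and this subproblem is exactly the one that limited Chechik et al.\ to $\tilde O(m\sqrt{n})$ (they handle $|S|$ separator vertices at cost $\tilde O(m|S|)$, balancing against depth-$\delta$ trees to get $\delta=\sqrt{n}$). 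Random sampling does not make that cost go away, and you give no argument that separators small enough to iterate on can be \emph{found} by sampling; indeed they cannot in general.

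The paper's actual mechanism is different in kind. It builds a hierarchy of $O(\log n)$ levels where level $i$ has a designated separator set $S_i$ with $|S_i|\le n/2^i$. The key idea is to measure distance at level $i$ not by ordinary hop-count but by \emph{$S_i$-distance}: the number of $S_i$-vertices on a path. Because the condensation at level $i$ is a DAG and $S_i$ is a feedback vertex set in the level-$i$ graph, a generalized ES-tree can maintain $S_i$-distances to depth $\delta=O(\log^2 n)$ in $O(m\delta)$ time. Whenever an SCC's $S_i$-diameter exceeds $\delta$, a BFS-layer argument yields a balanced separator inside $S_i$ that is promoted to $S_{i+1}$; this is what keeps $|S_{i+1}|\le|S_i|/2$. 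Thus large \emph{ordinary} diameter is never an obstacle---only $S_i$-diameter matters, and that is controlled by construction. Your outline has no analogue of this change of metric, and without it the recursion does not close.
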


Our algorithm makes the standard assumption of an oblivious adversary which does not have access to the coin flips made by the algorithm. But our algorithm does NOT require the assumption of a non-adaptive adversary, which is ignorant of answers to queries as well: the reason is simply that SCC and SSR information is unique, so the answers to queries do not reveal any information about the algorithm. One key exception is that for SSR, if the algorithm is expected to return a witness path, then it does require the assumption of a non-adaptive adversary. 

A standard reduction described in appendix \ref{sec:fullyReach} also implies a simple algorithm for maintaining reachability from some set $S \subseteq V$ to $V$ in a fully-dynamic graph with vertex set $V$ that is a data structure that answers queries for any $s \in S, v \in V$ on whether $s$ can reach $v$. The amortized expected update time is $\tilde{O}(|S|m/t)$ and query time $O(t)$ for every $t \in [1, |S|]$. We allow vertex updates, i.e. insertions or deletions of vertices with incident edges, which are more general than edge updates. This generalizes a well-known trade-off result for All-Pairs Reachability\cite{roditty2016fully, lkacki2013improved} with $\tilde{O}(nm/t)$ amortized update time and query time $O(t)$ for every $t \in [1, n]$.

Finally, we point out that maintaining SCCs and SSR is related to the more difficult (approximate) shortest-path problems. In fact, the algorithms \cite{shiloach1981line, henzinger1995fully, henzinger2014sublinear,henzinger2015improved} can also maintain (approximate) shortest-paths in decremental directed graphs. For undirected graphs, the decremental Single-Source Approximate Shortest-Path problem was recently solved to near-optimality\cite{henzinger2014decremental}, and deterministic algorithms\cite{bernstein2016deterministic, bernstein2017deterministic} have been developed that go beyond the $O(mn)$ barrier. We hope that our result inspires new algorithms to tackle the directed versions of these problems. 

\section{Preliminaries}
\label{sec:prelim}

In this paper, we let a graph $H = (V,E)$ refer to a directed multi-graph where we allow multiple edges between two endpoints and self-loops but say that a cycle contains at least two distinct vertices. We refer to the vertex set of $H$ by $V(H)$ and the edge set by $E(H)$. We denote the input graph by $G$, let  $V= V(G)$ and $E = E(G)$ and define $n = |V|$ and $m = |E|$. If the context is clear, we simply write sets $X$ instead of their cardinality $|X|$ in calculations to avoid cluttering. 

We define a subgraph of $H$ to be a graph $H'$ with $V(H') = V(H)$ and $E(H') \subseteq E(H)$. Observe that this deviates from the standard definition of subgraphs since we require the vertex set to be equivalent. We write $H \setminus E'$ as a shorthand for the graph $(V(H), E(H) \setminus E')$ and $H \cup E'$ as a shorthand for $(V(H), E(H) \cup E')$. For any $S \subseteq V(H)$, we define $E^H_{out}(S)$ to be the set $(S \times V(H)) \cap E(H)$, i.e. the set of all edges in $H$ that emanate from a vertex in $S$; we analogously define $E^H_{in}(S)$ and $E^H(S) = E^H_{in}(S) \cup E^H_{out}(S)$. If the context is clear, we drop the superscript and simply write $E_{in}(S), E_{out}(S), E(S)$.

For any graph $H$, and any two vertices $u,v \in V(H)$, we denote by $\mathbf{dist}_H(u,v)$ the distance from $u$ to $v$ in $H$. We also define the notion of $S$-distances for any $S \subseteq V(H)$ where for any pair of vertices $u,v \in V(H)$, the $S$-distance $\mathbf{dist}_H(u,v, S)$ denotes the minimum number of vertices in $S \setminus \{v\}$ encountered on any path from $u$ to $v$. Alternatively, the $S$-distance corresponds to $\mathbf{dist}_{H'}(u,v)$ where $H'$ is a graph with edges $E_{out}(S)$ of weight $1$ and edges $E \setminus E_{out}(S)$ of weight $0$. It therefore follows that for any $u,v \in V(H)$, $\mathbf{dist}_H(u,v) = \mathbf{dist}_H(u,v,V)$. 

We define the diameter of a graph $H$ by $\mathbf{diam}(H) = \max_{u,v \in V} \mathbf{dist}_H(u,v)$ and the $S$-diameter by $\mathbf{diam}(H, S) = \max_{u,v \in V(H)} \mathbf{dist}_H(u,v,S)$. Therefore, $\mathbf{diam}(H) = \mathbf{diam}(H, V)$. For convenience, we often omit the subscript on relations if the context is clear and write $\mathbf{dist}(u,v,S)$ 

We denote that a vertex $u$ \textit{reaches} $v$ in $H$ by $u \leadsto_H v$, and if $u \leadsto_H v$ and $v \leadsto_H u$, we simply write $u \rightleftarrows_H v$ and say $u$ and $v$ are \textit{strongly-connected}. We also use $\leadsto$ and $\rightleftarrows$ without the subscript if the underlying graph $H$ is clear from the context. We say that $H$ is strongly-connected if for any $u,v \in V(H)$, $u \rightleftarrows v$. We call the maximal subgraphs of $H$ that are strongly-connected, the strongly-connected components (SCCs). We denote by $\textsc{Condensation}(H)$ the \textit{condensation} of $H$, that is the graph where all vertices in the same SCC in $H$ are contracted. To distinguish we normally refer to the vertices in $\textsc{Condensation}(H)$ as \textit{nodes}. Each node in $\textsc{Condensation}(H)$ corresponds to a vertex set in $H$. The node set of a condensation $\textsc{Condensation}(H)$ forms a partition of $V(H)$. For convenience we define the function $\textsc{Flatten}(X)$ for a family of sets $X$ with $\textsc{Flatten}(X) = \bigcup_{x \in X} x$. This is useful when discussing condensations. Observe further that $\textsc{Condensation}(H)$ can be a multi-graph and might also contain self-loops. If we have an edge set $E'$ with all endpoints in $H$, we let $\textsc{Condensation}(H) \cup E'$ be the multi-graph obtained by mapping the endpoints of each vertex in $E'$ to their corresponding SCC node in $\textsc{Condensation}(H)$ and adding the resulting edges to $\textsc{Condensation}(H)$.

Finally, for two partitions $P$ and $P'$ of a set $U$, we say that partition $P$ is a \textit{melding} for a partition $P'$ if for every set $X \in P'$, there exists a set $Y \in P$ with $X \subseteq Y$. We also observe that \textit{melding} is transitive, thus if $P$ is a melding for $P'$ and $P'$ a melding for $P''$ then $P$ is a melding for $P''$.


\section{Overview}
\label{subsec:overview}


We now introduce the graph hierarchy maintained by our algorithm, followed by a high-level overview of our algorithm.

\paragraph{High-level overview of the hierarchy.}  Our hierarchy has levels $0$ to $\lfloor \lg n \rfloor + 1$ and we associate with each level $i$ a subset $E_i$ of the edges $E$. 
The sets $E_i$ form a partition of $E$; we define the edges that go into each $E_i$ later in the overview but point out that we maintain $E_{\lfloor \lg n \rfloor + 1} = \emptyset$. We define a graph hierarchy $\hat{G} = \{\hat{G}_0, \hat{G}_1, .. ,\hat{G}_{\lfloor \lg n \rfloor + 1}\}$ such that each graph $\hat{G}_i$ is defined as 
\[
\hat{G}_i = \textsc{Condensation}((V, \bigcup_{j < i} E_j)) \cup E_{i}
\]
That is, each $\hat{G}_i$ is the condensation of a subgraph of $G$ with some additional edges. As mentioned in the preliminary section, we refer to the elements of the set $\hat{V}_i = V(\hat{G}_i)$ as \textit{nodes} to distinguish them from \textit{vertices} in $V$. We use capital letters to denote nodes and small letters to denote vertices. We let $X_i^v$ denote the node in $\hat{V}_i$ with $v \in X_i^v$. Observe that each node $X$ corresponds to a subset of vertices in $V$ and that for any $i$, $\hat{V}_i$ can in fact be seen as a partition of $V$. For $\hat{G}_0 = \textsc{Condensation}((V, \emptyset)) \cup E_{0}$, the set $\hat{V}_0$ is a partition of singletons, i.e. $\hat{V}_0 = \{ \{v\} | v \in V\}$, and $X_0^v = \{ v\}$ for each $v \in V$. 

Observe that because the sets $E_i$ form a partition of $E$ and $E_{\lfloor \lg n \rfloor + 1} = \emptyset$, the top graph $\hat{G}_{\lfloor \lg n \rfloor + 1}$ is simply defined as
\[
\hat{G}_{\lfloor \lg n \rfloor + 1} = \textsc{Condensation}((V, \bigcup_{j < \lfloor \lg n \rfloor + 1} E_j)) \cup E_{\lfloor \lg n \rfloor + 1} = \textsc{Condensation}((V,E)).
\]
Therefore, if we can maintain $\hat{G}_{\lfloor \lg n \rfloor + 1} $ efficiently, we can answer queries on whether two vertices $u,v \in V$ are in the same SCC in $G$ by checking if $X_{\lfloor \lg n \rfloor + 1}^u$ is equal to $X_{\lfloor \lg n \rfloor + 1}^v$.

Let us offer some intuition for the hierarchy. The graph $\hat{G}_0$ contains all the vertices of $G$, and all the edges of $E_0 \subseteq E$. By definition of $\textsc{Condensation}(\cdot)$, the nodes of $\hat{G}_1$ precisely correspond to the SCCs of $\hat{G}_0$. $\hat{G}_1$ also includes the edges $E_0$ (though some of them are contracted into self-loops in $\textsc{Condensation}((V, E_0))$), as well as 
the additional edges in $E_1$. These additional edges might lead to $\hat{G}_1$ having larger SCCs than those of $\hat{G}_0$; each SCC in $\hat{G}_1$ then corresponds to a node in $\hat{G_2}$. More generally, the nodes of $\hat{G}_{i+1}$ are the SCCs of $\hat{G}_{i}$. 

As we move up the hierarchy, we add more and more edges to the graph, so the SCCs get larger and larger. Thus, each set $\hat{V}_i$ is a \textit{melding} for any $\hat{V}_j$ for $j \leq i$; that is for each node $Y \in \hat{V}_j$ there exists a set $X \in \hat{V}_i$ such that $Y \subseteq X$. We sometimes say we \textit{meld} nodes $Y, Y' \in \hat{V}_j$ to $X \in \hat{V}_{i}$ if $Y, Y' \subseteq X$ and $j < i$. Additionally, we observe that for any SCC $Y \subseteq \hat{V}_i$ in $\hat{G}_i$, we meld the nodes in SCC $Y$ to a node in $X \in \hat{V}_{i+1}$, and $X$ consists exactly of the vertices contained in the nodes of $Y$. More formally, $X = \textsc{Flatten}(Y)$.


To maintain the SCCs in each graph $\hat{G}_i$, our algorithm employs a bottom-up approach. At level $i+1$ we want to maintain SCCs in the graph with all the edges in $\bigcup_{j \leq i+1} E_j$, but instead of doing so from scratch, we use the SCCs maintained at level $\hat{G}_i$ as a starting point. The SCCs in $\hat{G}_i$ are precisely the SCCs in the graph with edge set $\bigcup_{j \leq i} E_j$; so to maintain the SCCs at level $i+1$, we only need to consider how the sliver of edges in $E_{i+1}$ cause the SCCs in $\hat{G}_{i}$ to be melded into larger SCCs (which then become the nodes of $\hat{G}_{i+2}$).

If the adversary deletes an edge in $E_i$, all the graphs $\hat{G}_{i-1}$ and below remain unchanged, as do the nodes of $\hat{G}_{i}$. But the deletion might split apart an SCC in $G_i$, which will in turn cause a node of $\hat{G}_{i+1}$ to split into multiple nodes. This split might then cause an SCC of $\hat{G}_{i+1}$ to split, which will further propagate up the hierarchy.

In addition to edge deletions caused by the adversary, our algorithm will sometimes move edges from $E_i$ to $E_{i+1}$. Because the algorithm only moves edges \emph{up} the hierarchy, each graph $\hat{G}_i$ is only losing edges, so the update sequence remains decremental from the perspective of each $\hat{G}_i$. We now give an overview of how our algorithm maintains the hierarchy efficiently.


\paragraph{ES-trees.} A fundamental data structure that our algorithm employs is the ES-tree\cite{shiloach1981line, henzinger1995fully} that for a directed unweighted graph $G=(V,E)$ undergoing edge deletions, and a distinguished source $r \in V$ maintains the distance $\mathbf{dist}_G(r, v)$ for each $v \in V$. In fact, the ES-tree maintains a shortest-path tree rooted at $r$. We refer to this tree subsequently as ES out-tree. We call the ES in-tree rooted at $r$ the shortest-path tree maintained by running the ES-tree data structure on the graph $G$ with reversed edge set, i.e. the edge set where each edge $(u,v) \in E$ appears in the form $(v,u)$. We can maintain each in-tree and out-tree decrementally to depth $\delta > 0$ in time $O(|E| * \delta)$; that is we can maintain the distances $\mathbf{dist}_G(r, v)$ and $\mathbf{dist}_G(v, r)$ exactly
until one of the distances $\mathbf{dist}_G(r, v)$ or $\mathbf{dist}_G(v, r)$ exceeds $\delta$. 

\paragraph{Maintaining SCCs with ES-trees.} Consider again graph $\hat{G}_0$ and let $X \subseteq \hat{V}_0$ be some SCC in $\hat{G}_0$ that we want to maintain. Let some node $X'$ in $X$ be chosen to be the \textit{center} node of the SCC (In the case of $\hat{G}_0$, the node $X'$ is just a single-vertex set $\{ v \}$). We then maintain an ES in-tree and an ES out-tree from $X'$ that spans the nodes in $X$ in the induced graph $\hat{G}_0[X]$. We must maintain the trees up to distance $\mathbf{diam}(\hat{G}_0[X])$, so the total update time is $O(|E(\hat{G}_0[X])|* \mathbf{diam}(\hat{G}_0[X]))$. 

Now, consider an edge deletion to $\hat{G}_0$ such that the ES in-tree or ES out-tree at $X'$ is no longer a spanning tree. Then, we detected that the SCC $X$ has to be split into at least two SCCs $X_1, X_2, .., X_k$ that are node-disjoint with $X = \bigcup_i X_i$. Then in each new SCC $X_i$ we choose a new center and initialize a new ES in-tree and ES out-tree.

\paragraph{Exploiting small diameter.} The above scheme clearly is quite efficient if $\mathbf{diam}(\hat{G}_0)$ is very small. Our goal is therefore to choose the edge set $E_0$ in such a way that $\hat{G}_0$ contains only SCCs of small diameter. We therefore turn to some insights from \cite{chechik2016decremental} and extract information from the ES in-tree and out-tree to maintain small diameter. Their scheme fixes some $\delta > 0$ and if a set of nodes $Y \subseteq X$ for some SCC $X$ is at distance $\Omega(\delta)$ from/to $\textsc{Center}(X)$ due to an edge deletion in $\hat{G}_0$, they find a node separator $S$ of size $O(\min\{|Y|, |X \setminus Y|\}\log n / \delta)$; removing $S$ from  $\hat{G}_0$ causes $Y$ and $X \setminus Y$ to no longer be in the same SCC. We use this technique and remove edges incident to the node separator $S$ from $E_0$ and therefore from $\hat{G}_0$. One subtle observation we want to stress at this point is that each node in the separator set appears also as a single-vertex node in the graph $\hat{G}_1$; this is because each separator node $\{ s\}$ for some $s \in V$ is not \textit{melded} with any other node in $\hat{V}_0$, as it has no edges in $\hat{G}_0$ to or from any other node.

For some carefully chosen $\delta = \Theta(\log^2 n)$, we can maintain $\hat{G}_0$ such that at most half the nodes in $\hat{V}_0$ become separator nodes at any point of the algorithm.  This follows since each separator set is small in comparison to the smaller side of the cut and since each node in $\hat{V}_0$ can only be $O(\log n)$ times on the smaller side of a cut. 

\paragraph{Reusing ES-trees.} Let us now refine our approach to maintain the ES in-trees and ES out-trees and introduce a crucial ingredient devised by Roditty and Zwick\cite{roditty2008improved}. Instead of picking an arbitrary center node $X'$ from an SCC $X$ with $X' \in X$, we are going to pick a vertex $r \in \textsc{Flatten}(X) \subseteq V$ uniformly at random and run our ES in-tree and out-tree $\mathcal{E}_{r}$ from the node $X_0^r$ on the graph $\hat{G}_0$. For each SCC $X$ we denote the randomly chosen root $r$ by $\textsc{Center}(X)$. In order to improve the running time, we \textit{reuse} ES-trees when the SCC $X$ is split into SCCs $X_1, X_2, .. , X_k$, where we assume wlog that $r \in \textsc{Flatten}(X_1)$, by removing the nodes in $X_2, .., X_k$ from $\mathcal{E}_{r}$ and setting $\textsc{Center}(X_1) = r$. Thus, we only need to initialize a new ES-tree for the SCC $X_2, .. , X_k$. Using this technique, we can show that each node is expected to participate in $O(\log n)$ ES-trees over the entire course of the algorithm, since we expect that if a SCC $X$ breaks into SCCs $X_1, X_2, .. , X_k$ then we either have that every SCC $X_i$ is of at most half the size of $X$, or with probability at least $1/2$ that $X_1$ is the new SCC that contains at least half the vertices, i.e. that the random root is containing in the largest part of the graph. Since the ES-trees work on induced graphs with disjoint node sets, we can therefore conclude that the total update time for all ES-trees is $O(m \log n* \mathbf{diam}(\hat{G}_0))$. 

We point out that using the ES in-trees and out-trees to detect node separators as described above complicates the analysis of the technique by Roditty and Zwick\cite{roditty2008improved} but a clever proof presented in \cite{chechik2016decremental} shows that the technique can still be applied. In our paper, we present a proof that can even deal with some additional complications and that is slightly simpler.

\paragraph{A contrast to the algorithm of Chechik et al \cite{chechik2016decremental}.}

Other than our hierarchy, the overview we have given so far largely comes from the algorithm of Chechik et al \cite{chechik2016decremental}. However, their algorithm does not use a hierarchy of graphs. Instead, they show that for any graph $G$, one can find (and maintain) a node separator $S$ of size $\tilde{O}(n/\delta)$ such that all SCCs in $G$ have diameter at most $\delta$. They can then use ES-trees with random sources to maintain the SCCs in $G \setminus S$ in total update time $\tilde{O}(m\delta)$. This leaves them with the task of computing how the vertices in $S$ might meld some of the SCCs in $G \setminus S$. They are able to do this in total update time $\tilde{O}(m|S|) = \tilde{O}(mn/\delta)$ by using an entirely different technique of \cite{lkacki2013improved}. Setting $\delta = \tilde{O}(\sqrt{n})$, they achieve the optimal trade-off between the two techniques: total update time $\tilde{O}(m\sqrt{n})$ in expectation.

We achieve our $\tilde{O}(m)$ total update time by entirely avoiding the technique of \cite{lkacki2013improved} for separately handling a small set of separator nodes, and instead using the graph hierarchy described above, where at each level we set $\delta$ to be polylog rather than $\tilde{O}(\sqrt{n})$. 

We note that while our starting point is the same as \cite{chechik2016decremental}, using a hierarchy of separators forces us to take a different perspective on the function of a separator set. The reason is that it is simply not possible to ensure that at each level of the hierarchy, all SCCs have small diameter. To overcome this, we instead aim for separator sets that decompose the graph into SCCs that are small with respect to a different notion of distance. The rest of the overview briefly sketches this new perspective, while sweeping many additional technical challenges under the rug. 

\paragraph{Refining the hierarchy.} So far, we only discussed how to maintain $\hat{G}_0$ efficiently by deleting many edges from $E_0$ and hence ensuring that SCCs in $\hat{G}_0$ have small diameter. To discuss our bottom-up approach, let us define our graphs $\hat{G}_i$ more precisely.

We maintain a separator hierarchy $\mathcal{S} = \{S_0, S_1, .. , S_{\lfloor \lg{n} \rfloor + 2}\}$ where $\hat{V}_0 = S_0 \supseteq S_1 \supseteq .. \supseteq S_{\lfloor \lg{n} \rfloor + 1} = S_{\lfloor \lg{n} \rfloor + 2} = \emptyset$, with $|S_i| \leq n/2^i$, for all $i \in [0, \lfloor \lg{n} \rfloor + 2]$ (see below that for technical reasons we need to define $S_{\lfloor \lg{n} \rfloor + 2}$ to define $\hat{G}_{\lfloor \lg{n} \rfloor + 1}$). Each set $S_i$ is a set of single-vertex nodes -- i.e. nodes of the form $\{ v \}$ -- that is monotonically increasing over time. 

We can now more precisely define each edge set $E_i = E(\textsc{Flatten}(S_{i} \setminus S_{i+1}))$. To avoid clutter, we abuse notation slightly referring henceforth to $\textsc{Flatten}(X)$ simply as $X$ if $X$ is a set of singleton sets and the context is clear. We therefore obtain
\[
\hat{G}_i =  \textsc{Condensation}((V, \bigcup_{j < i} E_j)) \cup E_{i} = \textsc{Condensation}(G \setminus E(S_{i})) \cup E(S_{i} \setminus S_{i+1})).
\]
In particular, note that $\hat{G}_i$ contains all the edges of $G$ except those in $E(S_{i+1})$; as we move up to level $\hat{G}_{i+1}$, we add the edges incident to $S_{i+1} \setminus S_{i+2}$. Note that if $s \in S_{i} \setminus S_{i+1}$, and our algorithm then adds $s$ to $S_{i+1}$, this will remove all edges incident to $s$ from $E_{i}$ and add them to $E_{i+1}$. Thus the fact that the sets $S_i$ used by the algorithm are monotonically increasing implies the desired property that edges only move up the hierarchy (remember that we add more vertices to $S_i$ due to new separators found on level $i-1$).

At a high-level, the idea of the hierarchy is as follows. Focusing on a level $i$, when the ``distances'' in some SCC of $\hat{G}_i$ get too large (for a notion of distance defined below), the algorithm will add a carefully chosen set of separator nodes $s_1, s_2, ..$ in $S_i$ to $S_{i+1}$. By definition of our hierarchy, this will remove the edges incident to the $s_i$ from $\hat{G}_i$, thus causing the SCCs of $\hat{G}_i$ to decompose into smaller SCCs with more manageable ``distances''. We note that our algorithm always maintains the invariant that nodes added to $S_{i+1}$ were previously in $S_i$, which from the definition of our hierarchy, ensures that at all times the separator nodes in $S_{i+1}$ are single-vertex nodes in $\hat{V}_{i+1}$; this is because the nodes of $\hat{V}_{i+1}$ are the SCCs of $\hat{G}_i$, and $\hat{G}_i$ contains no edges incident to $S_{i+1}$.


\paragraph{Exploiting $S$-distances.} For our algorithm, classic ES-trees are only useful to maintain SCCs in $\hat{G}_0$; in order to handle levels $i > 0$ we develop a new generalization of ES-trees that use a different notion of distance. This enables us to detect when SCCs are split in graphs $\hat{G}_i$ and to find separator nodes in $\hat{G}_i$ as discussed above more efficiently. 

Our generalized ES-tree (GES-tree) can be seen as a combination of the classic ES-trees\cite{shiloach1981line} and a data structure by Italiano\cite{italiano1988finding} that maintains reachability from a distinguished source in a directed acyclic graph (DAG), and which can be implemented in total update time $O(m)$.

Let $S$ be some feedback vertex set in a graph $G = (V,E)$; that is, every cycle in $G$ contains a vertex in $S$. Then our GES-tree can maintain
$S$-distances and a corresponding shortest-path tree up to $S$-distance $\delta > 0$ from a distinguished source $X_i^r$ for some $r \in V$ in the graph $G$. (See Section \ref{sec:prelim} for the definition of $S$-distances.) This data structure can be implemented to take $O(m\delta)$ total update time.


\paragraph{Maintaining the SCCs in $\hat{G}_i$.} Let us focus on maintaining SCCs in $\hat{G}_i = \textsc{Condensation}(G \setminus E(S_{i})) \cup E(S_{i} \setminus S_{i+1})$. Since the condensation of any graph forms a DAG, every cycle in $\hat{G}_i$ contains at least one edge from the set $E(S_{i} \setminus S_{i+1})$. Since $E(S_{i} \setminus S_{i+1})$ is a set of edges that is incident to $S_i$, we have that $S_i$ forms a feedback node set of $\hat{G}_i$. Now consider the scheme described in the paragraphs above, but instead of running an ES in-tree and out-tree from each center $\textsc{Center}(X)$ for some SCC $X$, we run a GES in-tree and out-tree on $\hat{G}_i[X]$ that maintains the $S_i$-distances to depth $\delta$.  Using this GES, whenever a set $Y \subseteq X$ of nodes has $S_i$-distance $\Omega(\delta)$, we show that we can find a separator $S$ of size $O(\min\{|S_i \cap Y|, |S_i \cap (X \setminus Y)|\}\log n / \delta)$ that only consists of nodes that are in $\{\{s\} | s \in S_i\}$; we then add the elements of set $S$ to the set $S_{i+1}$, and we also remove the nodes $Y$ from the GES-tree, analogously to our discussion of regular ES-trees above. Note that adding $S$ to $S_{i+1}$ removes the edges $E(S)$ from $\hat{G}_i$; since we chose $S$ to be a separator, this causes $Y$ and $X \setminus Y$ to no longer be part of the same SCC in $\hat{G}_i$. Thus, to maintain the hierarchy, we must then split nodes in $\hat{G}_{i+1}$ into multiple nodes corresponding to the new SCCs in $\hat{G}_i$: $X \setminus (Y \cup S)$, $Y \setminus S$ and every single-vertex set in $S$ ($Y$ might not form a SCC but we then further decompose it after we handled the node split). This might cause some self-loops in $\hat{G}_{i+1}$ to become edges between the newly inserted nodes (resulting from the split) and needs to be handled carefully to embed the new nodes in the GES-trees maintained upon the SCC in $\hat{G}_{i+1}$ that $X$ is part of. Observe that this does not result in edge insertions but only remaps the endpoints of edges. Further observe that splitting nodes can only increase $S_{i+1}$-distance since when they were still contracted their distance from the center was equivalent. Since $S_{i+1}$-distance still might increase, the update to might trigger further changes in the graph $\hat{G}_{i+1}$. 

Thus, overall, we ensure that all SCCs in $\hat{G}_i$ have $S_i$-diameter at most $O(\delta)$, and can hence be efficiently maintained by GES-trees. In particular, we show that whenever an SCC exceeds diameter $\delta$, we can, by moving a carefully chosen set of nodes in $S_i$ to $S_{i+1}$, remove a corresponding set of edges in $\hat{G}_i$, which breaks the large-$S_i$-diameter SCC into SCCs of smaller $S_i$-diameter.

\paragraph{Bounding the total update time.} Finally, let us sketch how to obtain the total expected running time $O(m \log^4 n)$. We already discussed how by using random sources in GES-trees (analogously to the same strategy for ES-trees), we ensure that each node is expected to be in $O(\log n)$ GES-trees maintained to depth $\delta = O(\log^2 n)$. 
Each such GES-tree is maintained in total update time $O(m\delta) = O(m\log^2n)$, so we have $O(m \log^3 n)$ total expected update time for each level, and since we have $O(\log n)$ levels, we obtain total expected update time $O(m\log^4 n)$. We point out that we have not included the time to compute the separators in our running time analysis; indeed, computing separators efficiently is one of the major challenges to building our hierarchy. Since implementing these subprocedures efficiently is rather technical and cumbersome, we omit their description from the overview but refer to section \ref{subsec:separators} for a detailed discussion.

\section{Generalized ES-trees}
\label{subsec:EStree}

Even and Shiloach\cite{shiloach1981line} devised a data structure commonly referred to as ES-trees that given a vertex $r \in V$ in a graph $G=(V,E)$ undergoing edge deletions maintains the shortest-path tree from $r$ to depth $\delta$ in total update time $O(m \delta)$ such that the distance $\mathbf{dist}_G(r,v)$ of any vertex $v \in V$ can be obtained in constant time. Henzinger and King\cite{henzinger1995fully} later observed that the ES-tree can be adapted to maintain the shortest-path tree in directed graphs. 

For our algorithm, we devise a new version of the ES-trees that maintains the shortest-path tree with regard to $S$-distances. We show that if $S$ is a \textit{feedback vertex set} for $G$, that is a set such that every cycle in $G$ contains at least one vertex in $S$, then the data structure requires only $O(m\delta)$ total update time. Our fundamental idea is to combine classic ES-trees with techniques to maintain Single-Source Reachability in DAGs which can be implemented in linear time in the number of edges\cite{italiano1988finding}. Since $\mathbf{dist}_G(r, v) = \mathbf{dist}_G(r, v, V)$  and $V$ is a trivial feedback vertex set, we have that our data structure generalizes the classic ES-tree. Since the empty set is a feedback vertex set for DAGs, our data structure also matches the time complexity of Italiano's data structure. We define the interface formally below.

\begin{definition}
\label{def:GES}
Let $G=(V,E)$ be a graph and $S$ a feedback vertex set for $G$, $r \in V$ and $\delta > 0$. We define a generalized ES-tree $\mathcal{E}_r$ (GES) to be a data structure that supports the following operations:
\begin{itemize}
    \item $\textsc{InitGES}(r, G, S, \delta)$: Sets the parameters for our data structure. We initialize the data structure and return the GES.
    \item $\textsc{Distance}(r ,v )$: $\forall v \in V$, if $\mathbf{dist}_G(r ,v, S) \leq \delta$, $\mathcal{E}_r$ reports $\mathbf{dist}_G(r ,v, S)$, otherwise $\infty$.
    \item $\textsc{Distance}(v ,r )$: $\forall v \in V$, if $\mathbf{dist}_G(v ,r, S) \leq \delta$, $\mathcal{E}_r$ reports $\mathbf{dist}_G(v ,r, S)$, otherwise $\infty$.
    \item $\textsc{Delete}(u,v)$: Sets $E \gets E \setminus \{(u,v)\}$.
    \item $\textsc{Delete}(V')$: For $V' \subseteq V$, sets $V \gets V \setminus V'$, i.e. removes the vertices in $V'$ and all incident edges from the graph $G$.
    \item $\textsc{GetUnreachableVertex}()$: Returns a vertex $v \in V$ with $\max\{ \mathbf{dist}_G(r ,v, S), \mathbf{dist}_G(v, r, S)\} > \delta$ or $\bot$ if no such vertex exists.
\end{itemize}
\end{definition}

\begin{lemma}
\label{lma:SimpleGES}
The GES $\mathcal{E}_r$ as described in definition \ref{def:GES} can be implemented with total initialization and update time $O(m \delta)$ and requires worst-case time $O(1)$ for each operation $\textsc{Distance}(\cdot)$ and $\textsc{GetUnreachableVertex}()$.
\end{lemma}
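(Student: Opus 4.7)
The plan is to implement the GES as two symmetric data structures: an out-tree rooted at $r$ that answers $\textsc{Distance}(r,\cdot)$ and an in-tree run on the reversed graph that answers $\textsc{Distance}(\cdot,r)$. I will describe only the out-tree. Recast the $S$-distance as a shortest-path problem on $G$ where every edge whose tail lies in $S$ has weight $1$ and every other edge has weight $0$; this is the reformulation already recorded in Section~\ref{sec:prelim}. Since $S$ is a feedback vertex set, the weight-$0$ edges induce a DAG on $V\setminus S$ (edges into $S$ cannot participate in a weight-$0$ cycle), which is what will let me borrow Italiano's DAG-reachability idea inside each level.

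For each vertex $v$ I will store: a level $\ell(v)\in\{0,1,\dots,\delta,\infty\}$ equal to the current $S$-distance from $r$ when that distance is at most $\delta$; a parent pointer $p(v)$ to a predecessor witnessing $\ell(v)$; and a cursor into $v$'s incoming adjacency list recording how far I have already scanned for a valid predecessor at the present level. Vertices with $\ell(v)=\infty$ are kept in a doubly linked list $U$, and the in-tree maintains a symmetric list; either one gives an $O(1)$-time answer to $\textsc{GetUnreachableVertex}$. Initialisation is a single $0$-$1$ BFS from $r$, which fills in $\ell$, $p$, and the cursors in $O(m+n)$ time, and seeds the level-lists that I maintain for propagation.

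Every deletion is handled by the standard Even-Shiloach update adapted to $0/1$ weights. If the deleted edge was the parent edge of some $v$, I invoke $\textsc{Fix}(v)$: starting from its cursor, $v$'s incoming list is scanned for a predecessor $u$ with $\ell(u)+w(u,v)=\ell(v)$; if one is found, $p(v)$ is reset and the routine returns. Otherwise $\ell(v)$ is raised to the smallest value $\ell(u)+w(u,v)$ still attainable (or to $\infty$, in which case $v$ is inserted into $U$), the cursor is reset, $v$ is moved to the appropriate level-list, and every out-neighbour whose own level was supported by the old $\ell(v)$ is enqueued for recursive fixing in non-decreasing order of level. Within a single level I process these recursive calls in the topological order inherited from the DAG $G[V\setminus S]$, which is exactly the role played by Italiano's structure: it guarantees that a non-$S$ vertex is only revisited after every predecessor at that level has been settled.

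For the time bound, note that $\ell(v)$ is non-decreasing and takes at most $\delta+2$ distinct values, so the cursor of $v$ is reset at most $\delta+1$ times; each reset eventually scans $v$'s incoming list once, contributing $O(\deg^{\text{in}}(v)\cdot\delta)$ work over the algorithm and $O(m\delta)$ in total. Between resets the cursor only advances, so deletions that do not touch the current parent edge amortise to $O(m)$; all other bookkeeping (distance look-ups, moving a vertex between level-lists or into $U$, returning an element of $U$) is $O(1)$ worst case, giving the claimed bounds. The step I expect to be the main obstacle is arguing that propagation along weight-$0$ edges inside a level does not inflate the running time: this is precisely where the feedback-vertex-set hypothesis is essential, because it lets me replace a Dijkstra-style relaxation within a level by a single topological sweep through $G[V\setminus S]$, and it is also what prevents a level increase from cycling back to affect its own predecessors.
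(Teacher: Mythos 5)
Your proposal is correct and follows essentially the same route as the paper's proof: recast $S$-distance as a $0/1$-weighted shortest path, run a standard Even--Shiloach update with a per-vertex cursor over incoming edges (the paper's $\textsc{LevelEdges}$ sets), amortize over at most $\delta$ level increments to get $O(m\delta)$, and invoke the feedback-vertex-set hypothesis to control the weight-$0$ edges. The only cosmetic difference is in how that hypothesis is deployed: you organize weight-$0$ propagation as a topological sweep of the DAG $G[V\setminus S]$ within each level, whereas the paper argues directly that a reconnection through a descendant would force a $0$-weight cycle and hence a cycle avoiding $S$ --- the same underlying fact.
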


We defer the full prove to the appendix \ref{sec:proofsimpleges} but sketch the proof idea. 

\begin{proof} (sketch)
Consider a classic ES-tree with each edge weight $w(u,v)$ of an edge $(u,v)$ in $E_{out}(S)$ set to $1$ and all other edges of weight $0$. Then, the classic ES-tree analysis maintains with each vertex $v \in V$ the distance level $l(v)$ that expresses the current distance from $s$ to $v$. We also have a shortest-path tree $T$, where the path in the tree from $s$ to $v$ is of weight $l(v)$. Since $T$ is a shortest-path tree, we also have that for every edge $(u,v) \in E$, $l(v) \leq l(u) + w(u,v)$. Now, consider the deletion of an edge $(u,v)$ from $G$ that removes an edge that was in $T$. To certify that the level $l(v)$ does not have to be increased, we scan the in-going edges at $v$ and try to find an edge $(u', v) \in E$ such that $l(v) = l(u') + w(u', v)$. On finding this edge, $(u',v)$ is added to $T$. The problem is that if we allow $0$-weight cycles, the edge $(u',v)$ that we use to reconnect $v$ might come from a $u'$ that was a descendant of $v$ in $T$. This will break the algorithm, as it disconnects $v$ from $s$ in $T$. But we show that this bad case cannot occur because $S$ is assumed to be a feedback vertex set, so at least one of the vertices on the cycle must be in $S$ and therefore the out-going edge of this vertex must have weight $1$ contradicting that there exists any $0$-weight cycle. The rest of the analysis follows closely the classic ES-tree analysis.
\end{proof}

To ease the description of our SCC algorithm, we tweak our GES implementation to work on the multi-graphs $\hat{G}_i$. We still root the GES at a vertex $r \in V$, but maintain the tree in $\hat{G}_i$ at $X_i^r$. The additional operations and their running time are described in the following lemma whose proof is straight-forward and therefore deferred to appendix \ref{sec:proofAugmentedGES}. Note that we now deal with nodes rather than vertices which makes the definition of $S$-distances ambitious (consider for example a node containing two vertices in $S$). For this reason, we require $S \subseteq \{\{v\} | v \in V\} \cap \hat{V}$ in the lemma below, i.e. that the nodes containing vertices in $S$ are single-vertex nodes. But as discussed in the paragraph ``Refining the hierarchy" in the overview, our hierarchy ensures that every separator node $X \in S_i$ is always just a single-vertex node in $\hat{G}_i$. Thus this constraint can be satisfied by our hierarchy.

\begin{lemma}
\label{lma:AugmentedGES}
Say we are given a partition $\hat{V}$ of a universe $V$ and the graph $\hat{G}=(\hat{V}, E)$, a feedback node set $S \subseteq \{\{v\} | v \in V\} \cap \hat{V}$, a distinguished vertex $r \in V$, and a positive integer $\delta$. Then, we can run a GES $\mathcal{E}_{r}$ as in definition \ref{def:GES} on $\hat{G}$ in time $O(m\delta + \sum_{X \in \hat{V}} E(X) \log X)$ supporting the additional operations:
\begin{itemize}
    \item $\textsc{SplitNode}(X)$: the input is a set of vertices $X$ contained in node $Y \in \hat{V}$, such that either $E \cap (X \times Y \setminus X)$ or $E \cap (Y \setminus X \times X)$ is an empty set, which implies $X \not\rightleftarrows_{\hat{G} \setminus S} Y \setminus X$. We remove the node $Y$ in $\hat{V}$ and add node $X$ and $Y \setminus X$ to $\hat{V}$.
    \item $\textsc{Augment}(S')$: This procedure adds the nodes in $S'$ to the feedback vertex set $S$. Formally,
    the input is a set of single-vertex sets $S' \subseteq \{\{v\} | v \in V\} \cap \hat{V}$. $\textsc{Augment}(S')$ then adds every $s \in S'$ to $S$.
\end{itemize}
\end{lemma}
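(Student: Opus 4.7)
The strategy is to run the simple GES of Lemma~\ref{lma:SimpleGES} on $\hat{G}$ verbatim, maintaining for each node $X \in \hat{V}$ a level $l(X)$ equal to the current $S$-distance from $X^r$ (capped at $\delta$) together with a shortest-path tree rooted at $X^r$. Because $S \subseteq \{\{v\} \mid v \in V\} \cap \hat{V}$ consists of single-vertex nodes, the $0/1$ weighting that drives Lemma~\ref{lma:SimpleGES}---weight $1$ on edges leaving a node of $S$, weight $0$ elsewhere---is well-defined at the node level, and the feedback-node-set hypothesis still precludes $0$-weight cycles. Under this setup the $O(m\delta)$ term is inherited directly; the remaining task is to implement the two new operations within this budget plus the additional $O(\sum_{X \in \hat{V}} |E(X)|\log |X|)$ charge.

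For $\textsc{Augment}(S')$, each $\{s\} \in S'$ joins the feedback set, so every edge in $E_{out}(\{s\})$ has its weight flipped from $0$ to $1$. Flipping an edge weight up can only increase the level of its head, so it suffices to invoke, at each such head, the standard ES-tree level-increase subroutine. Since levels are monotone and capped at $\delta$, the amortized per-edge cost over the entire run of the algorithm is $O(\delta)$, which is absorbed into the existing $O(m\delta)$ potential. No new $0$-weight cycle is introduced, so the invariant of Lemma~\ref{lma:SimpleGES} (on which the correctness of the level-increase routine relies) is preserved---indeed strengthened.

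For $\textsc{SplitNode}(X)$ there are two subtasks. The first is physically separating the vertex lists and the incident edge lists of the parent node $Y$ into those belonging to $X$ and those belonging to $Y \setminus X$. We advance two fingers in parallel, one through $X$ and one through $Y \setminus X$, spending $O(1)$ per step, stopping as soon as one side is exhausted; the smaller side is then copied into a fresh adjacency structure while the larger side retains the existing one. A standard small-side argument then charges the cost to $O(\sum_{X \in \hat{V}} |E(X)| \log |X|)$, because every edge lies on the smaller side of at most $O(\log |X|)$ splits during its lifetime. Self-loops at $Y$ that cross the split become genuine inter-node edges and are re-inserted into the appropriate adjacency lists during this scan. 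The second subtask is repairing the GES: both new nodes inherit level $l(Y)$, and the hypothesis that either $E \cap (X \times (Y \setminus X))$ or $E \cap ((Y \setminus X) \times X)$ is empty guarantees that at least one of the two new nodes can retain $Y$'s current tree parent without violating the shortest-path-tree invariant; the other may need to reselect a parent, which is again handled by the level-increase subroutine and paid for from the $O(m\delta)$ budget.

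The main obstacle is verifying that, after each structural change, the full invariant of the simple GES---every edge $(u,v)$ satisfies $l(v) \leq l(u) + w(u,v)$, every node has a tree parent witnessing its level, and no $0$-weight cycle exists---is restored before control returns, so that subsequent $\textsc{Delete}$ calls continue to behave as analyzed in Lemma~\ref{lma:SimpleGES}. The delicate case is $\textsc{SplitNode}$: one must check that the direction of edges forbidden by hypothesis is precisely the direction whose absence rules out a $0$-weight cycle containing the newly created intra-$Y$ edges, and that invoking the level-increase subroutine separately at each node of the two new sides suffices to propagate all induced increases. Once these checks are in place, the cost bookkeeping reduces to the two buckets described above, giving the claimed $O(m\delta + \sum_{X \in \hat{V}} |E(X)| \log|X|)$ total running time.
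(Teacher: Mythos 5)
Your proposal is correct and follows essentially the same route as the paper's proof: run the simple GES with the $0/1$ node-level weighting, implement \textsc{Augment} by flipping edge weights to $1$ and re-running the level-increase routine, and implement \textsc{SplitNode} by moving the smaller side's edges into a fresh node, inheriting level $l(Y)$, and charging via the smaller-side-halves argument to get the $O(\sum_{X} |E(X)|\log|X|)$ term. The only cosmetic difference is that you identify the smaller side with a two-finger scan whereas the paper stores cardinalities with each node, and the checks you flag at the end (the split hypothesis ruling out new $0$-weight cycles, and the fixing routine restoring the tree invariant) are exactly the ones the paper carries out.
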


We point out that we enforce the properties on the set $X$ in the operation $\textsc{SplitNode}(X)$ in order to ensure that the set $S$ remains a feedback node set at all times. 

\section{Initializing the \texorpdfstring{the graph hierarchy $\hat{\mathcal{G}}$}{the graph hierarchy}} 
\label{subsec:Preprocessing}

We assume henceforth that the graph $G$ initially is strongly-connected. If the graph is not strongly-connected, we can run Tarjan's algorithm\cite{tarjan1972depth} in $O(m+n)$ time to find the SCCs of $G$ and run our algorithm on each SCC separately. 

\begin{algorithm}
\caption{$\textsc{Preprocessing}(G, \delta)$}
\label{alg:preprocessing}
\KwIn{A strongly-connected graph $G=(V,E)$ and a parameter $\delta > 0$.}
\KwOut{A hierarchy of sets $\mathcal{S} = \{ S_0, S_1, .., S_{\lfloor \lg{n} \rfloor + 2}\}$ and graphs $\hat{\mathcal{G}} = \{\hat{G}_0, \hat{G}_1, .. , \hat{G}_{\lfloor \lg{n} \rfloor + 1}\}$ as described in section \ref{subsec:overview}. Further, each SCC $X$ in $\hat{G}_i$ for $i \leq \lfloor \lg n \rfloor + 2$, has a center $\textsc{Center}(X)$ such that for any $y \in X$, $\mathbf{dist}_{\hat{G}_{i}}(\textsc{Center}(X),y, S_{i}) \leq \delta/2$ and $\mathbf{dist}_{\hat{G}_{i}}(y,\textsc{Center}(X), S_{i}) \leq \delta/2$. }
\BlankLine

$ S_0 \gets V$\;
$ \hat{V}_0 \gets \{\{v\} | v \in V\}$\;
$ \hat{G}_0 \gets (\hat{V}_0, E)$\label{lne:G0Init}\;
\For{ $i = 0 $ \KwTo $ \lfloor \lg{n} \rfloor$}{
    \tcc{Find separator $S_{Sep}$ such that no two vertices in the same SCC in $\hat{G}_{i}$ have $S_i$-distance $\geq \delta/2$. $P$ is the collection of these SCCs.}
	$ (S_{Sep}, P) \gets \textsc{Split}(\hat{G}_i, S_{i}, \delta/2)$ \;
	$ S_{i+1} \gets S_{Sep} $\;
	$\textsc{InitNewPartition}(P, i, \delta)$\;
    
    \tcc{Initialize the graph $\hat{G}_{i+1}$}
	$ \hat{V}_{i+1} \gets P$\;
    $ \hat{G}_{i+1} \gets (\hat{V}_{i+1}, E)$\label{lne:GIInit}\;
}        
$S_{\lfloor \lg{n} \rfloor + 2} \gets \emptyset$\;
\end{algorithm}

Our procedure to initialize our data structure is presented in pseudo-code in algorithm \ref{alg:preprocessing}. We first initialize the level $0$ where $\hat{G}_0$ is simply $G$ with the vertex set $V$ mapped to the set of singletons of elements in $V$. 

Let us now focus on an iteration $i$. Observe that the graph $\hat{G}_i$ initially has all edges in $E$ (by initializing each $\hat{G}_i$ in line \ref{lne:G0Init} or line \ref{lne:GIInit}). Our goal is then to ensure that all SCCs in $\hat{G}_i$ are of small $S_i$-diameter at the cost of removing some of the edges from $\hat{G}_i$. Invoking the procedure $\textsc{Split}(\hat{G}_i, S_i, \delta/2)$ provides us with a set of separator nodes $S_{Sep}$ whose removal from $\hat{G}_i$ ensure that the $S_i$ diameter of all remaining SCCs is at most $\delta$. The set $P$ is the collection of all these SCCs, i.e. the collection of the SCCs in $\hat{G}_i \setminus E(S_{Sep})$. 

Lemma \ref{lma:split} below describes in detail the properties satisfied by the procedure $\textsc{Split}(\hat{G}_i, S_i, \delta/2)$. In particular, besides the properties ensuring small $S_i$-diameter in the graph $\hat{G}_i \setminus E(S_{Sep})$ (properties \ref{prop:1} and \ref{prop:2}), the procedure also gives an upper bound on the number of separator vertices (property \ref{prop:3}). Setting $\delta = 64 \lg^2 n$, clearly implies that $|S_{Sep}| \leq |S_i|/2$ and ensures running time $O(m \log^3 n)$.

\begin{lemma}
\label{lma:split}
$\textsc{Split}(\hat{G}_i, S_i, \delta/2)$ returns a tuple $(S_{Sep}, P)$ where $P$ is a partition of the node set $\hat{V}_i$ such that 
\begin{enumerate}
    \item for $X \in P$, and nodes $u,v \in X$ we have  $\mathbf{dist}_{G \setminus E(S_{Sep})}(u,v,S) \leq \delta/2$, and \label{prop:1}
    \item for distinct $X, Y \in P$, with nodes $u \in X$ and $v \in Y$,   $u \not\rightleftarrows_{G \setminus E(S_{Sep})} v$, and \label{prop:2}
    \item 
    $|S_{Sep}| \leq  \frac{32 \lg^2 n}{\delta} |S_i|$. \label{prop:3}
\end{enumerate}
The algorithm runs in time $O\left(\delta m \lg n\right)$.
\end{lemma}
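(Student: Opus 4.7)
My plan is to prove Lemma~\ref{lma:split} by an iterative decomposition that repeatedly invokes the GES-tree data structure of Lemma~\ref{lma:SimpleGES} and uses a layering argument to extract small separators consisting entirely of single-vertex $S_i$-nodes. First, I would compute the SCCs of $\hat{G}_i$ via Tarjan's algorithm in $O(m+n)$ time and process each SCC independently. For a current piece $C$, pick a center $X^*$ from $C \cap S_i$ (if $C$ has no $S_i$-nodes then since $S_i$ is a feedback node set $C$ must be a singleton, which is trivially done), build both a GES out-tree and a GES in-tree rooted at $X^*$ on $\hat{G}_i[C]$ with feedback node set $C \cap S_i$ to depth $\delta/4$, and test whether both trees span $C$. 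If so, routing through $X^*$ gives $\mathbf{dist}(u,v,S_i) \leq \delta/2$ for all $u,v \in C$, so add $C$ to $P$. Otherwise (say the out-tree misses some vertex), let $U$ be the unreached set and stratify the reached part $R = C \setminus U$ into layers $L_0,L_1,\dots,L_{\delta/4}$ by exact $S_i$-distance from $X^*$.

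The crux is to select a ``cut layer'' $j^* \in \{1,\dots,\delta/4\}$ so that $|L_{j^*} \cap S_i|$ is small compared to the smaller of the two sides, measured by $S_i$-count. I would prove such a $j^*$ exists via a geometric-growth argument on the prefix sums $f(j) = \sum_{\ell < j} |L_\ell \cap S_i|$: if every $j$ violated the desired bound, then $f(j{+}1) > f(j) (1 + \Omega(\log n / \delta))$, so $f$ would grow by a multiplicative factor exceeding $n$ over the $\delta/4$ layers, contradicting $f \leq |S_i| \leq n$. Crucially $L_{j^*} \cap S_i$ consists entirely of single-vertex nodes (since $S_i$ does, by the hierarchy invariants), and removing it genuinely separates the inner part (layers $< j^*$) from the outer part (layers $> j^*$ together with $U$): any path from inner to outer must cross a vertex at exact $S_i$-distance $j^*$, and because $S_i$-distance increases only along edges leaving $S_i$-nodes, the last boundary-crossing vertex on any such path must itself lie in $L_{j^*} \cap S_i$. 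I would add $L_{j^*} \cap S_i$ to $S_{Sep}$, recompute SCCs of $\hat{G}_i[C] \setminus E(L_{j^*} \cap S_i)$, and recurse on each. The case where the in-tree fails is symmetric.

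Properties~\ref{prop:1} and \ref{prop:2} then follow from the stopping rule and the separator correctness just sketched. For property~\ref{prop:3} I would use the standard halving argument: each $v \in S_i$ lies on the ``smaller $S$-side'' of a cut at most $\log_2 |S_i|$ times along its chain of containing pieces, since each such event at least halves the piece's $S$-count. Summing per-cut separator sizes yields $|S_{Sep}| \leq O(\log n / \delta) \cdot |S_i| \cdot \log n = O(\log^2 n / \delta) \cdot |S_i|$, which matches the stated $32 \log^2 n / \delta$ bound after substituting the constant from the geometric-growth step. For the running time, the plan is to reuse GES-trees across cuts in the style of Roditty and Zwick: when a piece is cut, the sub-piece containing the center keeps its GES-tree (updated via the $\textsc{SplitNode}$ operation of Lemma~\ref{lma:AugmentedGES} and the $\textsc{Delete}(V')$ operation of Lemma~\ref{lma:SimpleGES}), while only the other sub-pieces pay a fresh initialization cost; combined with the halving, this telescopes to $O(m \delta \log n)$.

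The main obstacle I anticipate is justifying that the thinned separator $L_{j^*} \cap S_i$, rather than the entire layer $L_{j^*}$, suffices to disconnect the two sides; this argument depends sensitively on $S_i$ being a feedback node set and on the precise definition of $S$-distance, and it is the main reason we work in that metric. A secondary subtlety is calibrating the geometric-growth argument to balance against $|{\cdot} \cap S_i|$ rather than against raw node count, which requires running the argument symmetrically on both sides so that whichever side has fewer $S_i$-nodes drives the halving recursion, ensuring the final bound scales with $|S_i|$ rather than with $|\hat{V}_i|$.
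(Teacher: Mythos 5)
Your separator extraction is essentially the paper's $\textsc{OutSeparator}$/$\textsc{InSeparator}$: layer by exact $S$-distance, use geometric growth of the prefix $S$-counts to find a thin layer, and use the feedback-vertex-set property to argue that only the $S$-vertices of that layer need to be cut; that part, and the derivation of properties 1--3 from it, is sound. The genuine gap is in the running time. You build the GES in-/out-trees on \emph{all} of the current piece $C$ (cost $O(E(C)\delta)$) \emph{before} knowing whether $C$ decomposes in a balanced way, and your cut is balanced only with respect to $|\cdot\cap S_i|$, not with respect to vertices or edges. Keeping the tree for ``the sub-piece containing the center'' does not rescue this: the center sits in layer $0$, so the reused side is the inner ball of radius $<j^*\le\delta/4$, which can contain an arbitrarily small fraction of the edges, while the outer side --- which pays a fresh $O(E(\cdot)\,\delta)$ initialization --- can retain a $1-o(1)$ fraction of them. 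On a long chain of small-$S$-diameter blobs this repeats far more than $O(\log n)$ times and the re-initialization costs do not telescope; the $S_i$-halving argument bounds only how often an $S_i$-node lies on the smaller $S$-side (hence the separator size), and says nothing about how often an edge is re-scanned.

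The paper avoids exactly this trap with two devices you are missing. First, it runs $\textsc{OutSeparator}$ and $\textsc{InSeparator}$ \emph{in parallel} (interleaving their machine operations) from an arbitrary root, so the time to find a cut is $O(E(V_{Sep}))$ for whichever side finishes first, and it recurses on a side only after explicitly checking $|V'_{Sep}|\le\frac{2}{3}|V|$. Second, it initializes a GES-tree only in the case where both the out-ball and the in-ball of the root to $S$-depth $d/16$ have been certified to contain more than $\frac{2}{3}|V|$ vertices; Claim~\ref{clm:largeSCCifEStree} then shows the component retained at the root keeps at least $|V|/3$ vertices, so every recursive call is on at most $\frac{2}{3}|V|$ vertices and each edge appears in $O(\log n)$ calls, giving the $O(dm\log n)$ bound. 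Without a certification of this kind before paying $O(E(C)\delta)$, your recursion does not meet the claimed time bound, even though the output it produces is correct.
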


We then set $S_{i+1} = S_{Sep}$ which implicitly removes the edges $E(S_{i+1})$ from the graph $\hat{G}_i$. We then invoke the procedure $\textsc{InitNewPartition}(P, i, \delta)$, that is presented in algorithm \ref{alg:newPart}. The procedure initializes for each $X \in P$ that corresponds to an SCC in $\hat{G}_i$ the GES-tree from a vertex $r \in \textsc{Flatten}(X)$ chosen uniformly at random on the induced graph $\hat{G}_i[X]$. Observe that we are not explicitly keeping track of the edge set $E_i$ but further remove edges implicitly by only maintaining the induced subgraphs of $\hat{G}_i$ that form SCCs. A small detail we want to point out is that each separator node $X \in S_{Sep}$ also forms its own single-node set in the partition $P$.

\begin{algorithm}
\caption{$\textsc{InitNewPartition}(P, i, \delta)$\;}
\label{alg:newPart}
\KwIn{A partition of a subset of the nodes $V$, and the level $i$ in the hierarchy.}
\KwResult{Initializes a new ES-tree for each set in the partition on the induced subgraph $\hat{G}_i$.}
\BlankLine
\ForEach{$ X \in P $}{
    Let $r$ be a vertex picked from $\textsc{Flatten}(X)$ uniformly at random. \;
    $\textsc{Center}(X) \gets r$\; 
    \tcc{Init a generalized ES-tree from $\textsc{Center}(X)$ to depth $\delta$.}
    $\mathcal{E}_r^i \gets \textsc{InitGES}(\textsc{Center}(X), \hat{G}_i[X], S_i, \delta)$\;
}
\end{algorithm}

On returning to algorithm \ref{alg:preprocessing}, we are left with initializing the graph $\hat{G}_{i+1}$. Therefore, we simply set $\hat{V}_{i+1}$ to $P$ and use again all edges $E$. Finally, we initialize $S_{\lfloor \lg n \rfloor + 2}$ to the empty set which remains unchanged throughout the entire course of the algorithm.

Let us briefly sketch the analysis of the algorithm which is more carefully analyzed in subsequent sections. Using again $\delta = 64 \lg^2 n$ and lemma \ref{lma:split}, we ensure that $|S_{i+1}| \leq |S_i|/2$, thus $|S_i| \leq n/2^i$ for all levels $i$. The running time of executing the $\textsc{Split}(\cdot)$ procedure $\lfloor \lg n \rfloor + 1$ times incurs running time $O(m \log^4 n)$ and initializing the GES-trees takes at most $O(m \delta)$ time on each level therefore incurring running time $O(m \log^3 n)$. 

\section{Finding Separators} 
\label{subsec:separators}

Before we describe how to update the data structure after an edge deletion, we want to explain how to find good separators since it is crucial for our update procedure. We then show how to obtain an efficient implementation of the procedure $\textsc{Split}(\cdot)$ that is the core procedure in the initialization.

Indeed, the separator properties that we want to show are essentially reflected in the properties of lemma \ref{lma:split}. For simplicity, we describe the separator procedures on simple graphs instead of our graphs $\hat{G}_i$; it is easy to translate these procedures to our multi-graphs
$\hat{G}_i$ because the separator procedures are not dynamic; they are only ever invoked on a fixed graph, and so we do not have to worry about node splitting and the like.

To gain some intuition for the technical statement of our separator properties stated in lemma \ref{lma:sep}, consider that we are given a graph $G=(V,E)$, a subset $S$ of the vertices $V$, a vertex $r \in V$ and a depth $d$. Our goal is to find a separator $S_{Sep} \subseteq S$, such that every vertex in the graph $G \setminus S_{Sep}$ is either at $S$-distance at most $d$ from $r$ \emph{or} cannot be reached from $r$, i.e. is separated from $r$. 

We let henceforth $V_{Sep} \subseteq V$ denote the set of vertices that are still reachable from $r$ in $G \setminus S_{Sep}$ (in particular there is no vertex $S_{Sep}$ contained in $V_{Sep}$ and $r \in V_{Sep}$). Then, a natural side condition for separators is to require the set $S_{Sep}$ to be small in comparison to the smaller side of the cut, i.e. small in comparison to $\min\{|V_{Sep}|, |V \setminus (V_{Sep} \cup S_{Sep})|\}$.

Since we are concerned with $S$-distances, we aim for a more general guarantee: we want the set $S_{Sep}$ to be small in comparison to the number of $S$ vertices on any side of the cut, i.e. small in comparison to $\min\{|V_{Sep} \cap S|, |(V \setminus (V_{Sep} \cup S_{Sep})) \cap S|\}$. This is expressed in property \ref{prop:balanceS} of the lemma.

\begin{lemma}[Balanced Separator]
\label{lma:sep}
There exists a procedure $\textsc{OutSeparator}(r, G, S, d)$ (analogously $\textsc{InSeparator}(r, G, S, d)$) where $G=(V,E)$ is a graph, $r \in V$ a root vertex, $S \subseteq V$ and $d$ a positive integer. The procedure computes a tuple $(S_{Sep}, V_{Sep})$ such that
\begin{enumerate}
    \item $S_{Sep} \subseteq S$, $V_{Sep} \subseteq V$, $S_{Sep} \cap V_{Sep} = \emptyset$, $r \in V_{Sep}$,
    \item \label{step:separator-distance} $\forall v \in V_{Sep} \cup S_{sep}$, we have $\mathbf{dist}_G(r,v,S) \leq d$ (analogously $\mathbf{dist}_G(v,r,S) \leq d$ for $\textsc{InSeparator}(r, G, S, d)$), 
    \item \[
    |S_{Sep}| \leq \frac{\min\{|V_{Sep}\cap S|, | (V \setminus (S_{Sep} \cup V_{Sep})) \cap S|\} 2\log{n}}{d},\] \label{prop:balanceS}
    and
    \item for any $x \in V_{Sep}$ and $y \in V \setminus (S_{Sep} \cup V_{Sep})$, we have $u \not\leadsto_{G \setminus E(S_{Sep})} v$ (analogously $v \not\leadsto_{G \setminus E(S_{Sep})} u$ for $\textsc{InSeparator}(r, G, S, d)$).
\end{enumerate}
The running time of both $\textsc{OutSeparator}(\cdot)$ and $\textsc{InSeparator}(\cdot)$ can be bounded by $O(E(V_{Sep}))$.
\end{lemma}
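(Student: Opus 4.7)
My plan is to prove Lemma \ref{lma:sep} via a layered BFS in the $S$-distance metric, extracting the separator from a single ``sparse'' layer. For $\textsc{OutSeparator}(r, G, S, d)$ I run a BFS from $r$ that expands by $S$-distance---treating out-edges of $S$-vertices as weight $1$ and all other edges as weight $0$---producing balls $B_j = \{v : \mathbf{dist}_G(r, v, S) \leq j\}$. Writing $L_j = |B_j \cap S|$ and $R_j = |S| - L_j$, and fixing $\alpha$ of order $\log n / d$, I scan $j = 1, \ldots, d$ and stop at the first $j^*$ satisfying
\[
|(B_{j^*} \setminus B_{j^*-1}) \cap S| \;\leq\; \alpha \cdot \min\!\bigl(L_{j^*-1},\; R_{j^*}\bigr).
\]
The output is $S_{Sep} := (B_{j^*} \setminus B_{j^*-1}) \cap S$ together with $V_{Sep} := B_{j^*-1} \cup \bigl((B_{j^*} \setminus B_{j^*-1}) \setminus S\bigr)$, chosen so that $V_{Sep} \cup S_{Sep} = B_{j^*}$ while $V_{Sep} \cap S_{Sep} = \emptyset$. $\textsc{InSeparator}$ is symmetric, run on reversed edges; there it is enough to take $V_{Sep} := B_{j^*-1}^{in}$ without absorbing any non-$S$ layer vertices, for reasons I explain below.

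To show such a sparse layer exists within the first $d$ rounds I use a two-phase potential argument. Since $L_j$ is monotonically nondecreasing and $R_j$ monotonically nonincreasing, the comparison $L_{j-1} \leq R_j$ flips permanently at some threshold; call these regimes Phase A and Phase B. If sparsity fails throughout Phase A, the non-sparsity inequality yields $L_j > (1+\alpha) L_{j-1}$, so $L$ grows by a factor of $(1+\alpha)$ per round (the bootstrap at $L_0 = 0$ is automatic: failure at $j=1$ forces $L_1 \geq 1$). If sparsity fails throughout Phase B, then combining $L_j - L_{j-1} > \alpha R_j$ with $L_j - L_{j-1} = R_{j-1} - R_j$ gives $R_j < R_{j-1}/(1+\alpha)$, so $R$ shrinks by the same factor until it hits $0$. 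Each phase therefore lasts at most $\log_{1+\alpha}|S| = O(\log n / \alpha)$ rounds, and for the prescribed $\alpha$ their sum is strictly less than $d$, contradicting the assumption.

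Properties 1 and 2 are direct from the construction: $V_{Sep}$ and $S_{Sep}$ are disjoint by the exclusion of $S$ in the second summand of $V_{Sep}$, and $V_{Sep} \cup S_{Sep} \subseteq B_{j^*} \subseteq B_d$ gives the $S$-distance bound. Property 3 is immediate from the choice of $j^*$. For property 4, suppose toward contradiction an edge $(u,v) \in G \setminus E(S_{Sep})$ with $u \in V_{Sep}$ and $v \notin V_{Sep} \cup S_{Sep} = B_{j^*}$. If $u \notin S$ the edge has weight $0$, so $\mathbf{dist}_G(r, v, S) \leq \mathbf{dist}_G(r, u, S) \leq j^*$, placing $v \in B_{j^*}$, a contradiction. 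If $u \in S$, the constraint $u \in V_{Sep} \cap S$ forces $u \in B_{j^*-1}$ (since the second summand of $V_{Sep}$ excludes $S$), so the weight-$1$ edge yields $\mathbf{dist}_G(r, v, S) \leq j^*$ and again $v \in B_{j^*}$, a contradiction. Extending edge-by-edge rules out any path from $V_{Sep}$ to the outside.

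The main obstacle is precisely this asymmetry between the two directions. For $\textsc{OutSeparator}$, an $S$-vertex inside $V_{Sep}$ can cross a weight-$1$ edge to a non-$S$ vertex sitting at the outer boundary, and that vertex cannot be pushed into $S_{Sep}$ because of the constraint $S_{Sep} \subseteq S$; the only remedy is to absorb such boundary vertices into $V_{Sep}$, which is where the union form comes from. No analogous difficulty arises for $\textsc{InSeparator}$: a weight-$0$ in-edge from outside would drag its source into the ball, and a weight-$1$ in-edge has its $S$-source already in $V_{Sep} \cup S_{Sep}$. The running time $O(|E(V_{Sep})|)$ is the standard BFS cost, since the algorithm only inspects out-edges of vertices in $B_{j^*-1} \subseteq V_{Sep}$ and the per-layer sparsity check is $O(|B_{j^*} \setminus B_{j^*-1}|)$, dominated by the BFS.
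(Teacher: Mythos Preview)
Your proposal is correct and follows essentially the same approach as the paper: a layered BFS in the $S$-distance metric, outputting the $S$-vertices of the first sufficiently sparse layer as $S_{Sep}$ and the preceding ball (plus non-$S$ boundary vertices) as $V_{Sep}$, with existence established by a doubling/halving argument on $|B_j \cap S|$. Your two-phase potential argument is a cleaner packaging of the paper's case split at the ``halfway'' layer $i'$, and your explicit treatment of the non-$S$ boundary vertices (and the in/out asymmetry) makes Property~4 more transparent than in the paper; one tiny inaccuracy is that the BFS also scans out-edges of the non-$S$ vertices in layer $j^*$, not just $B_{j^*-1}$, but since those vertices lie in $V_{Sep}$ your $O(|E(V_{Sep})|)$ bound is unaffected.
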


Again, we defer the proof of the lemma \ref{lma:sep} to the appendix \ref{sec:ProofLemmaSep}, but sketch the main proof idea.

\begin{proof}
To implement procedure $\textsc{OutSeparator}(r, G, S, d)$, we start by computing a BFS at $r$. Here, we assign edges in $E_{out}(S)$ again weight $1$ and all other edges weight $0$ and say a layer consists of all vertices that are at same distance from $r$. To find the first layer, we can use the graph $G \setminus E_{out}(S)$ and run a normal BFS from $r$ and all vertices reached form the first layer $L_0$. We can then add for each edge $(u,v) \in E_{out}(S)$ with $u \in L_0$ the vertex $v$ to $L_1$ if it is not already in $L_0$. We can then contract all vertices visited so far into a single vertex $r'$ and repeat the procedure described for the initial root $r$. It is straight-forward to see that the vertices of a layer that are also in $S$ form a separator of the graph. To obtain a separator that is small in comparison to $|V_{Sep} \cap S|$, we add each of the layers $0$ to $d/2$ one after another to our set $V_{Sep}$, and output the index $i$ of the first layer that grows the set of $S$-vertices in $V_{Sep}$ by factor less than $(1+\frac{2 \log n}{d})$. We then set $S_{Sep}$ to be the vertices in $S$ that are in layer $i$. If the separator is not small in comparison to $| (V \setminus (S_{Sep} \cup V_{Sep})) \cap S|$, we grow more layers and output the first index of a layer such that the separator is small in comparison to $| (V \setminus (S_{Sep} \cup V_{Sep})) \cap S|$. This layer must exist and is also small in comparison to $|V_{Sep} \cap S|$. Because we find our separator vertices $S_{Sep}$ using a BFS from $r$, a useful property of our separator is that all the vertices in $S_{Sep}$ and $V_{Sep}$ are within bounded distance from $r$.

Finally, we can ensure that the running time of the procedure is linear in the size of the set $E(V_{Sep})$, since these are the edges that were explored by the BFS from root $r$.
\end{proof}

Let us now discuss the procedure $\textsc{Split}(G,S,d)$ that we already encountered in section \ref{subsec:Preprocessing} and whose pseudo-code is given in algorithm \ref{alg:split}. Recall that the procedure computes a tuple $(S_{Split}, P)$ such that the graph $G \setminus E(S_{Split})$ contains no SCC with $S$-diameter larger $d$ and where $P$ is the collection of all SCCs in the graph $G \setminus E(S_{Split})$.

\begin{algorithm}
\caption{$\textsc{Split}(G, S, d)$}
\label{alg:split}
\KwIn{A graph $G=(V,E)$, a set $S \subseteq V$ and a positive integer $d$.}
\KwOut{Returns a tuple $(S_{Split}, P)$, where $S_{Split} \subseteq S$ is a separator such that no two vertices in the same SCC in $G \setminus E(S)$ have $S$-distance greater than $d$. $P$ is the collection of these SCCs.}
\BlankLine

$S_{Split} \gets \emptyset; P \gets \emptyset; G' \gets G;$\;

\While{$G' \neq \emptyset$}{
    Pick an arbitrary vertex $r$ in $V$.\;
    Run in parallel $\textsc{OutSeparator}(r, G', S, d/16)$ and $\textsc{InSeparator}(r, G', S, d/16)$ and let $(S_{Sep}, V_{Sep})$ be the tuple returned by the first subprocedure that finishes.\label{lne:sepTwoWay}\;

    \lIf(\label{lne:sepTwoWayIf}){$|V_{Sep}| \leq \frac{2}{3}|V|$}{
        $(S'_{Sep}, V'_{Sep}) \gets (S_{Sep}, V_{Sep})$
    }\Else(\label{lne:sepTwoWayElse}){
        Run the separator procedure that was aborted in line \ref{lne:sepTwoWay} until it finishes and let the tuple returned by this procedure be $(S'_{Sep}, V'_{Sep})$.
    }
    
    \If(\label{line:split-if-case}){$|V'_{Sep}| \leq \frac{2}{3}|V|$} {
        $(S_{Small}, P_{Small}) \gets \textsc{Split}(G'[V'_{Sep}], V'_{Sep} \cap S, d)$\label{lne:splitRecurseIf} \;
        $S_{Split} \gets S_{Split} \cup S_{Small} \cup S'_{Sep}$\label{lne:addtoS1}\;
        $P \gets P \cup P_{Small} \cup \{ \{s\} | s \in S'_{Sep}\})$\label{lne:addToPSep}\;
        $G' \gets G'[V \setminus (V'_{Sep} \cup S'_{Sep})]$
    }\Else(\label{line:split-else-case}){ 
        \tcc{Init a generalized ES-tree from $r$ to depth $d/2$.}
        $\mathcal{E}_r^i \gets \textsc{InitGES}(r, G', S, d/2)$\;
        \tcc{Find a good separator for every vertex that is far from $r$.}
         \While(\label{line:split-else-while}) {  $(v \gets \mathcal{E}_r^i.\textsc{GetUnreachableVertex}()) \neq \bot$}  {
            \If{$\mathcal{E}_r^i.\textsc{Distance}(r,v) > d/2$}{
                $(S''_{Sep}, V''_{Sep}) \gets \textsc{InSeparator}(v,G', S, d/4)$\;
            }\Else(\tcp*[h]{If $\mathcal{E}_r^i.\textsc{Distance}(v,r) > d/2$}){
                $(S''_{Sep}, V''_{Sep}) \gets \textsc{OutSeparator}(v,G', S, d/4)$\;
            }
            $\mathcal{E}_r.\textsc{Delete}(S''_{Sep} \cup V''_{Sep})$\;
        
            $(S'''_{Sep}, P''') \gets \textsc{Split}( G[V''_{Sep}] , V''_{Sep} \cap  S , d)$\label{lne:splitRecurse}\;
            $S_{Split} \gets S_{Split} \cup S''_{Sep} \cup S'''_{Sep}$\label{lne:addtoS2}\;
            $P \gets P \cup P''' \cup \{ \{s\} | s \in S''_{Sep}\})$\label{lne:addToP1}\;
        }
        $P \gets P \cup \{\mathcal{E}_r.\textsc{GetAllVertices}()\}$\label{lne:addToP2}\;
        $G' \gets \emptyset$\;
    }
}
\Return $(S_{Split}, P)$\;
\end{algorithm}

Let us sketch the implementation of the procedure $\textsc{Split}(G,S,d)$. We first pick an arbitrary vertex and invoke the procedures $\textsc{OutSeparator}(r, G', S, d/4)$ and $\textsc{InSeparator}(r, G', S, d/4)$ to run in parallel, that is the operations of the two procedures are interleaved during the execution. If one of these subprocedures returns and presents a separator tuple $(S_{Sep}, V_{Sep})$, the other procedure is aborted and the tuple $(S_{Sep}, V_{Sep})$ is returned. If $|V_{Sep}| \leq \frac{2}{3}|V|$, then we conclude that the separator function only visited a small part of the graph. Therefore, we use the separator subsequently, but denote the tuple henceforth as $(S'_{Sep},V'_{Sep})$. Otherwise, we decide the separator is not useful for our purposes. We therefore return to the subprocedure we previously aborted and continue its execution. We then continue with the returned tuple $(S'_{Sep}, V'_{Sep})$.

From there on, there are two possible scenarios. The first scenario is that the subprocedure producing $(S'_{Sep}, V'_{Sep})$ has visited a rather small fraction of the vertices in $V$ (line \ref{line:split-if-case}); in this case, we have pruned away a small number of vertices $V'_{Sep}$ while only spending time proportional to the smaller side of the cut, so we can simply recurse on $V'_{Sep}$. We also have to continue pruning away vertices from the original set $V$, until we have either removed all vertices from $G$ by finding these separators and recursing, or until we enter the else-case (line \ref{line:split-else-case}). 

The else-case in line \ref{line:split-else-case} is the second possible scenario:  note that in this case we must have entered the else-case in line \ref{lne:sepTwoWayElse} and had both the $\textsc{InSeparator}(\cdot)$ \emph{and} $\textsc{OutSeparator}(\cdot)$ explore the large side of the cut. Thus we cannot afford to simply recurse on the smaller side of the cut $V \setminus V'_{sep}$, as we have already spent time $|V'_{sep}| > |V \setminus V'_{sep}|$. Thus, for this case we use a different approach. We observe that because we entered the else-case in line \ref{lne:sepTwoWayElse} and since we entered the else-case \ref{line:split-else-case}, we must have had that $|V_{Sep}| \geq \frac{2}{3}|V|$ \textit{and} that $|V'_{Sep}| \geq \frac{2}{3}|V|$. We will show that in this case, the root $r$ must have small $S$-distance to and at least $\frac{1}{3}|V|$ vertices. We then show that this allows us to efficiently prune away at most $\frac{2}{3}|V|$ vertices from $V$ at large $S$-distance to or from $r$. We recursively invoke $\textsc{Split}(\cdot)$ on the induced subgraphs of vertex sets that we pruned away.

We analyze the procedure in detail in multiple steps, and summarize the result in lemma \ref{lma:splitFull} that is the main result of this section. Let us first prove that if the algorithm enters the else-case in line \ref{line:split-else-case} then we add an SCC of size at least $\frac{1}{3}|V|$ to $P$. 

\begin{claim}
\label{clm:largeSCCifEStree}
If the algorithm enters line \ref{line:split-else-case} then the vertex set returned by $\mathcal{E}_r.\textsc{GetAllVertices}()$ in line \ref{lne:addToP2} is of size at least $\frac{1}{3}|V|$.
\end{claim}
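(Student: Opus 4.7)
The plan is to identify an explicit set $U \subseteq V$ of size strictly greater than $|V|/3$ and show that it is preserved inside $\mathcal{E}_r$ throughout the inner while loop, so that $U \subseteq \mathcal{E}_r.\textsc{GetAllVertices}()$ at line \ref{lne:addToP2}. First I would observe that reaching line \ref{line:split-else-case} forces both separator calls of line \ref{lne:sepTwoWay} to return large sets: the entry into the else-branch of line \ref{lne:sepTwoWayElse} requires $|V_{Sep}| > \tfrac{2}{3}|V|$, and the entry into the else-branch of line \ref{line:split-else-case} then requires $|V'_{Sep}| > \tfrac{2}{3}|V|$. Since $\{V_{Sep}, V'_{Sep}\}$ consists exactly of the outputs of $\textsc{OutSeparator}(r,G',S,d/16)$ and $\textsc{InSeparator}(r,G',S,d/16)$, I would set $U := V^{out}_{Sep} \cap V^{in}_{Sep}$ and invoke inclusion-exclusion to obtain $|U| \ge |V^{out}_{Sep}| + |V^{in}_{Sep}| - |V| > \tfrac{1}{3}|V|$. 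By property 2 of Lemma \ref{lma:sep}, every $u \in U$ satisfies $\mathbf{dist}(r,u,S)\le d/16$ and $\mathbf{dist}(u,r,S)\le d/16$ in $G'$ (and hence in the initial GES graph).

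I would then prove by induction on the iteration counter of the inner while loop the following invariant: at the start of every iteration, every $u \in U$ still belongs to $\mathcal{E}_r$ and satisfies $\mathbf{dist}(r,u,S)\le d/16$ and $\mathbf{dist}(u,r,S)\le d/16$ in the current GES graph. The base case is immediate. For the inductive step, pick the unreachable vertex $v$ returned at the top of the iteration; WLOG $\mathbf{dist}(r,v,S) > d/2$, so the algorithm calls the $\textsc{InSeparator}$ variant and obtains $(S''_{Sep},V''_{Sep})$, all of whose elements $w$ satisfy $\mathbf{dist}(w,v,S)\le d/4$ by property 2 of Lemma \ref{lma:sep}. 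If some $u \in U$ were to lie in $V''_{Sep}\cup S''_{Sep}$, combining this with the inductive bound $\mathbf{dist}(r,u,S)\le d/16$ and the triangle inequality for $S$-distances (which holds since $S$-distance coincides with shortest-path distance in the weighted graph described in Section \ref{sec:prelim}) would yield $\mathbf{dist}(r,v,S)\le d/16 + d/4 = 5d/16 < d/2$, contradicting the unreachability of $v$. The same triangle-inequality argument shows that no vertex $w$ lying on a shortest $r \leadsto u$ path of $S$-length $\le d/16$ can be in $V''_{Sep}\cup S''_{Sep}$ either (for such a $w$, the path-prefix gives $\mathbf{dist}(r,w,S)\le d/16$, and together with $\mathbf{dist}(w,v,S)\le d/4$ this reproduces the contradiction), so the shortest path survives into the next iteration's graph and the distance bound for $u$ is maintained. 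The $u \leadsto r$ direction and the OutSeparator case are symmetric.

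The main obstacle is being disciplined about which graph the $S$-distances are measured in when applying Lemma \ref{lma:sep}, since $\mathcal{E}_r$'s working graph shrinks over time and every triangle inequality must be applied within a single consistent graph, namely the current GES graph at the start of the iteration under analysis. Once the invariant is in place, no $u \in U$ is ever passed to a $\textsc{Delete}$ call on $\mathcal{E}_r$, so $U \subseteq \mathcal{E}_r.\textsc{GetAllVertices}()$ when the loop terminates, giving $|\mathcal{E}_r.\textsc{GetAllVertices}()| \ge |U| > \tfrac{1}{3}|V|$ as required.
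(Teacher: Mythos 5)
Your proposal is correct and follows essentially the same route as the paper's proof: both take the intersection of the large out- and in-balls of $S$-radius $d/16$ around $r$ (size $> |V|/3$ by inclusion-exclusion) and use property \ref{step:separator-distance} of Lemma \ref{lma:sep} together with the triangle inequality for $S$-distances to show that any vertex pruned by a separator rooted at an unreachable vertex $v$ (with $\mathbf{dist}(r,v,S) > d/2$) must lie at $S$-distance more than $d/4$ from $r$, and hence cannot belong to this intersection or to the shortest paths witnessing its small distances. The paper packages this as a static set $SP$ of all shortest-path vertices that is shown disjoint from every pruned piece, whereas you run an induction over the loop iterations, but the substance is identical.
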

\begin{proof}
Observe first that since we did no enter the if-case in line \ref{lne:splitRecurseIf}, that $|V_{Sep}| > \frac{2}{3}|V|$ and $|V'_{Sep}| > \frac{2}{3}|V|$ (since we also cannot have entered the if case in line \ref{lne:sepTwoWayIf}). 

Since we could not find a sufficiently good separator in either direction, we certified that the $S$-out-ball from $r$ defined 
\[
B_{out}(r) = \{ v \in V | \mathbf{dist}_G(r , v, S) \leq d/16\}
\]
has size greater than $\frac{2}{3}|V|$, and that similarly, the $S$-in-ball $B_{in}(r)$ of $r$ has size greater than $\frac{2}{3}|V|$. This implies that 
\[
|B_{out}(r) \cap B_{in}(r)| > \frac{1}{3}|V|.
\]
Further, we have that every vertex on a shortest-path between $r$ and a vertex $v \in B_{out}(r) \cap B_{in}(r)$ has a shortest-path from and to $r$ of length at most $d/16$. Thus the $S$-distance between any pair of vertices in $B_{out}(r) \cap B_{in}(r)$ is at most $d/8$. Now, let $SP$ be the set of all vertices that are on a shortest-path w.r.t. $S$-distance between two vertices in $B_{out}(r) \cap B_{in}(r)$. Clearly, $B_{out}(r) \cap B_{in}(r) \subseteq SP$, so $|SP| \geq |V|/3$. It is also easy to see that $G[SP]$ has $S$-diameter at most $d/4$.

At this point, the algorithm repeatedly finds a vertex $v$ that is far from $r$ and finds a separator from $v$. We will now show that the part of the cut containing $v$ is always disjoint from $SP$; since $|SP| > |V|/3$, this implies that at least $|V|/3$ vertices remain in $\mathcal{E}_r$.

Consider some vertex $v$ chosen in line \ref{line:split-else-while}. Let us say that we now run $\textsc{InSeparator}(v,G',S,d/4)$; the case where we run $\textsc{OutSeparator}(v,G',S,d/4)$ is analogous. Now, by property \ref{step:separator-distance} in lemma \ref{lma:sep}, every $s \in S_{Sep}$ has $\mathbf{dist}(s,v,S) \leq d/4$. Thus, since we only run the \textsc{InSeparator} if we have $\mathbf{dist}(r,v,S) > d/2$, we must have $\mathbf{dist}(r,s,S) > d/4$. 
\end{proof}

We point out that claim \ref{clm:largeSCCifEStree} implies that $\textsc{Split}(\cdot)$ only recurses on disjoint subgraphs containing at most a $2/3$ fraction of the vertices of the given graph. To see this, observe that we either recurse in line \ref{lne:splitRecurseIf} on $G'[V'_{Sep}]$ after we explicitly checked whether $|V'_{Sep}| \leq \frac{2}{3}|V|$ in the if-condition, or we recurse in line \ref{lne:splitRecurse} on the subgraph pruned from the set of vertices that $\mathcal{E}_r$ was initialized on. But since by claim \ref{clm:largeSCCifEStree} the remaining vertex set in $\mathcal{E}_r$ is of size at least $|V|/3$, the subgraphs pruned away can contain at most $\frac{2}{3}|V|$ vertices.

We can use this observation to establish correctness of the $\textsc{Split}(\cdot)$ procedure.

\begin{claim}
\label{clm:SplitCorrectness}
$\textsc{Split}(G, S, d)$ returns a tuple $(S_{Sep}, P)$ where $P$ is a partition of the vertex set $V$ such that 
\begin{enumerate}
    \item for $X \in P$, and vertices $u,v \in X$ we have  $\mathbf{dist}_{G \setminus E(S_{Sep})}(u,v,S) \leq d$, and
    \item for distinct $X, Y \in P$, with vertices $u \in X$ and $v \in Y$,   $u \not\rightleftarrows_{G \setminus E(S_{Sep})} v$, and
    \item 
    $|S_{Split}| \leq  \frac{32 \log n}{d} \sum_{X \in P}  \lg (n  - |X \cap S|) |X \cap S|$. \label{prop:splitcorrect3}
\end{enumerate}
\end{claim}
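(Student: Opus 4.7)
The plan is to prove all three properties by strong induction on $|V|$; the base case $V = \emptyset$ is immediate since then $S_{Split} = P = \emptyset$. For the inductive step, I would first verify that $P$ is a partition of $V$ by walking through the while loop: each if-iteration removes $V'_{Sep} \cup S'_{Sep}$ from $G'$ after installing the recursively obtained partition of $V'_{Sep}$ plus the singletons $\{\{s\}: s \in S'_{Sep}\}$ into $P$, and the else-branch exhausts $G'$ by iteratively carving off $V''_{Sep} \cup S''_{Sep}$ chunks with the GES-tree, finishing by installing a ``leftover'' part from $\mathcal{E}_r.\textsc{GetAllVertices}()$.

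For properties 1 and 2 I would do a case analysis on how each part of $P$ is produced. Singletons $\{s\}$ with $s \in S_{Sep}$ trivially satisfy the $S$-diameter bound, and they are isolated in $G \setminus E(S_{Sep})$ because all incident edges have been stripped. Parts $X$ produced by recursive calls inherit property 1 from the induction hypothesis: the induction yields $S$-diameter at most $d$ inside the recursive working graph, and since $G \setminus E(S_{Sep})$ contains (on $V'_{Sep}$) exactly the edges of $G'[V'_{Sep}] \setminus E(S_{Small})$ together with additional detour paths through the exterior, distances can only shrink and the bound transfers. For property 2, the separation between the recursive world on $V'_{Sep}$ (or $V''_{Sep}$ in case B) and the rest of $V$ comes from property (4) of Lemma~\ref{lma:sep}: removing $E(S'_{Sep})$ kills directed reachability across the cut in one direction, which is enough to rule out strong connectivity, and this persists in $G \setminus E(S_{Sep})$ because $S_{Sep} \supseteq S'_{Sep}$. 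Within a single recursive call, induction supplies the non-strong-connectivity between distinct parts. The leftover GES part in case B satisfies the $S$-diameter bound because $\textsc{GetUnreachableVertex}()$ returning $\bot$ certifies that every remaining vertex has $S$-distance at most $d/2$ both to and from $r$, giving diameter at most $d$ via $r$.

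For property 3, the key observation is that every element of $S_{Split}$ appears in $P$ as its own singleton: lines \ref{lne:addToPSep} and \ref{lne:addToP1} immediately insert $\{\{s\}: s \in S'_{Sep}\}$ and $\{\{s\}:s \in S''_{Sep}\}$ into $P$, and by induction the same holds for the separator elements contributed by the recursive calls. Each such singleton $\{s\}$ has $|\{s\} \cap S| = 1$ and contributes $\lg(n-1)$ to the RHS, so
\[
\sum_{X \in P} \lg(n - |X \cap S|)\,|X \cap S| \;\geq\; |S_{Split}|\,\lg(n-1),
\]
which, multiplied by the prefactor $\tfrac{32\log n}{d}$, yields the claimed bound in the intended regime $d = \Theta(\log^2 n)$ (for much larger $d$, Lemma \ref{lma:sep} already forces the separators at every level to be empty and the bound is vacuous).

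The main obstacle is the bookkeeping: a single invocation of $\textsc{Split}$ can spawn many while-loop iterations of either type (interleaving if-case peelings and the case-B GES-tree phase), each adding a separator set and potentially recursing on a different subgraph. One must track that every such separator element lands as a singleton in $P$, that the various cuts' property-(4) separations from Lemma~\ref{lma:sep} compose across iterations without later cuts interfering with earlier ones (this uses that subsequent iterations operate on $G'[V \setminus (V'_{Sep} \cup S'_{Sep})]$, disjoint from the already-peeled $V'_{Sep}$), and that the induction hypothesis, whose ``$n$'' is the recursive call's $|V'_{Sep}|$ rather than the ambient $n$, can be upgraded by monotonicity of $\log$ to a bound in terms of the ambient $n$.
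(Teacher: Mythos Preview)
Your treatment of properties~1 and~2 is essentially the paper's own argument: induction on $|V|$, case analysis on how each part of $P$ arises (separator singletons, recursive parts, and the leftover GES piece), and using property~(4) of Lemma~\ref{lma:sep} to rule out strong connectivity across cuts.

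Your argument for property~3, however, has a genuine gap. You observe that every $s\in S_{Split}$ sits in $P$ as a singleton and therefore contributes $\lg(n-1)$ to $\sum_{X\in P}\lg(n-|X\cap S|)\,|X\cap S|$, whence the sum is at least $|S_{Split}|\lg(n-1)$. But this only yields $|S_{Split}|\le \frac{32\log n}{d}\cdot\text{(sum)}$ when $\frac{32\log n}{d}\lg(n-1)\ge 1$, i.e.\ when $d\le 32\log n\cdot\lg(n-1)$. The claim is stated for arbitrary $d$, and your fallback (``for much larger $d$ the separators are empty'') does not cover the intermediate range: Lemma~\ref{lma:sep} only guarantees $|S_{Sep}|\le \min\{\dots\}\cdot\frac{32\log n}{d}$, which is not forced to vanish until $d$ exceeds roughly $|S|\log n$. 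So there is a genuine gap between $d\approx\log^2 n$ and $d\approx |S|\log n$ where neither branch of your argument applies. More conceptually, your approach never invokes the balanced-separator bound of Lemma~\ref{lma:sep}, which is precisely what property~3 is meant to encode.

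The paper instead uses a charging argument: each time a separator $S_{Sep}$ is cut, one side of the cut holds at most half the current $S$-vertices; charge every $S$-vertex on that smaller side $\frac{32\log n}{d}$ units (using property~\ref{prop:balanceS} of Lemma~\ref{lma:sep} and the fact that the separator procedures are called with parameter at least $d/16$). A vertex $v$ that ends in a final part $X$ can have been on the smaller side at most $\lg(n-|X\cap S|)$ times, so summing the charges gives exactly the stated bound, uniformly in $d$.
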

\begin{proof}
Let us start with the first two properties which we prove by induction on the size of $|V|$ where the base case $|V|=1$ is easily checked. For the inductive step, observe that each SCC $X$ in the final collection $P$ was added to $P$ in line \ref{lne:addToPSep}, \ref{lne:addToP1} or \ref{lne:addToP2}. We distinguish by 3 cases: 
\begin{enumerate}
    \item a vertex $s$ was added as singleton set after appearing in a separator $S_{Sep}$ but then $\{s\}$ is strongly-connected and $s$ cannot reach any other vertex in $G \setminus E(S_{Sep})$ since it has no out-going edges, or
    \item an SCC $X$ was added as part of a collection $P'''$ in line  \ref{lne:addToP1}. But then we have that the collection $P'''$ satisfies the properties in $G[V''_{Sep}]$ by the induction hypothesis and since $V''_{Sep}$ was a cut side and $S''_{Sep}$ added to $S_{out}$, we have that there cannot be a path to \emph{and} from any vertex in $G \setminus E(S_{out})$, or
    \item we added the non-trivial SCC $X$ to $P$ after constructing an GES-tree from some vertex $r \in X$ and after pruning each vertex at $S$-distance to/from $r$ larger than $d/2$ (see the while loop on line \ref{line:split-else-while}). But then each vertex that remains in $X$ can reach $r$ within $S$-distance $d/2$ and is reached from $r$ within distance $d/2$ implying that any two vertices $u,v \in X$ have a path from $u$ to $v$ of $S$-distance at most $d$.
\end{enumerate}

Finally, let us upper bound the number of vertices in $S_{Split}$. We use a classic charging argument and argue that each time we add a separator $S_{Sep}$ to $S_{Split}$ with sides $V_{Sep}$ and $V \setminus (V_{Sep} \cup S_{Sep})$ at least one of these sides contains at most half the $S$-vertices in $V \cap S$. Let $X$ be the smaller side of the cut (in term of $S$-vertices) then by property \ref{prop:balanceS} from lemma \ref{lma:sep}, we can charge each $S$ vertex in $X$ for $\frac{32 \log{n}}{d}$ separator vertices (since we invoke $\textsc{OutSeparator}(\cdot)$ and $\textsc{InSeparator}(\cdot)$ with parameter at least $d/16$). 

Observe that once we determined that a separator $S_{Sep}$ that is about to be added to $S_{Split}$ in line \ref{lne:addtoS1} or \ref{lne:addtoS2}, we only recurse on the induced subgraph $G'[V_{Sep}]$ and let the graph in the next iteration be $G'[V \setminus (V_{Sep} \cup S_{Sep})$. 

Let $X$ be an SCC in the final collection $P$. Then each vertex $v \in X$ can only have been charged at most $\lg (n  - |X \cap S|)$ times. The lemma follows.
\end{proof}

It remains to bound the running time. Before we bound the overall running time, let us prove the following claim on the running time of invoking the separator procedures in parallel.

\begin{claim}
\label{clm:twowaysep}
We spend $O(E(V'_{Sep} \cup S'_{Sep}))$ time to find a separator in line \ref{lne:sepTwoWayIf} or \ref{lne:sepTwoWayElse}.
\end{claim}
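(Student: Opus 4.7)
The plan is to split into two cases matching the if-else structure on lines \ref{lne:sepTwoWayIf} and \ref{lne:sepTwoWayElse}, and in each case bound the total work by appealing to the running-time guarantee of Lemma \ref{lma:sep}. Let $T_{\text{first}}$ denote the solo running time of whichever of $\textsc{OutSeparator}(r, G', S, d/16)$ or $\textsc{InSeparator}(r, G', S, d/16)$ would finish first, and let $T_{\text{second}}$ be the solo running time of the other. Since the two procedures are executed in parallel by interleaving their operations at a constant rate, the wall-clock time of the parallel phase until the first one terminates is $O(T_{\text{first}})$, after which the aborted procedure can be resumed from the exact state where it was paused at no extra cost.

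First I would handle the if-case on line \ref{lne:sepTwoWayIf}, where we set $(S'_{Sep}, V'_{Sep}) = (S_{Sep}, V_{Sep})$ and discard the other procedure immediately. The total work is then $O(T_{\text{first}})$, and Lemma \ref{lma:sep} gives $T_{\text{first}} = O(E(V_{Sep})) = O(E(V'_{Sep} \cup S'_{Sep}))$, which is the desired bound.

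Next I would handle the else-case on line \ref{lne:sepTwoWayElse}, where after the first procedure finishes we resume the aborted procedure for an additional $T_{\text{second}} - T_{\text{first}}$ time. Total work is $O(T_{\text{first}} + (T_{\text{second}} - T_{\text{first}})) = O(T_{\text{second}})$. In this case $(S'_{Sep}, V'_{Sep})$ is precisely the output of the slower procedure, and since the faster procedure had already terminated before the slower one, we have $T_{\text{first}} \le T_{\text{second}}$. Lemma \ref{lma:sep} then yields $T_{\text{second}} = O(E(V'_{Sep})) = O(E(V'_{Sep} \cup S'_{Sep}))$, giving the claim.

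The main thing to get right is the bookkeeping of parallel execution: one has to be explicit that interleaving the two subroutines means the wall-clock time until the first terminates is at most a constant factor over $\min(T_{\text{first}}, T_{\text{second}})$, and that aborting the slower procedure before resuming it later does not destroy the $O(E(V'_{Sep}))$ bound from Lemma \ref{lma:sep}. Once this point is stated cleanly, the rest is a direct plug-in of the lemma.
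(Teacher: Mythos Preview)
Your proposal is correct and follows essentially the same approach as the paper's proof: both argue that the interleaved execution costs $O(T_{\text{first}})$ total work, then in the else-case observe that the resumed (slower) procedure's cost $O(T_{\text{second}})$ dominates, and finish by invoking the $O(E(V_{Sep}))$ bound from Lemma~\ref{lma:sep}. Your use of the abstractions $T_{\text{first}}$ and $T_{\text{second}}$ is slightly cleaner than the paper's phrasing, but the argument is the same.
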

\begin{proof}
Observe that we run the subprocedures $\textsc{OutSeparator}(r, G', S, d/16)$ and $\textsc{InSeparator}(r, G', S, d/16)$ in line \ref{lne:sepTwoWay} in parallel. Therefore, when we run them, we interleave their machine operations, computing one operation from $\textsc{OutSeparator}(r, G', S, d/16)$ and then one operation from $\textsc{InSeparator}(r, G', S, d/16)$ in turns. Let us assume that $\textsc{OutSeparator}(r, G', S, d/16)$ is the first subprocedure that finishes and returns tuple $(S_{Sep}, V_{Sep})$. Then, by lemma \ref{lma:sep}, the subprocedure used $O(E(V_{Sep} \cup S_{Sep}))$ time. Since the subprocedure $\textsc{InSeparator}(r, G', S, d/16)$ ran at most one more operation than $\textsc{OutSeparator}(r, G', S, d/16)$, it also used $O(E(V_{Sep} \cup S_{Sep}))$ operations. A symmetric argument establishes the same bounds, if $\textsc{InSeparator}(r, G', S, d/16)$ finishes first. The overhead induced by running two procedures in parallel can be made constant. 

Since assignments take constant time, the claim is vacuously true by our discussion if the if-case in line \ref{lne:splitRecurseIf} is true. Otherwise, we compute a new separator tuple by continuing the execution of the formerly aborted separator subprocedure. But by the same argument as above, this subprocedure's running time now clearly dominates the running time of the subprocedure that finished first in line \ref{lne:sepTwoWay}. The time to compute $(S'_{Sep}, V'_{Sep})$ is thus again upper bounded by $O(E(V'_{Sep}))$ by lemma \ref{lma:sep}, as required.
\end{proof}

Finally, we have established enough claims to prove lemma \ref{lma:splitFull}. 

\begin{lemma}[Strengthening of lemma \ref{lma:split}]
\label{lma:splitFull}
$\textsc{Split}(G, S, d)$ returns a tuple $(S_{Split}, P)$ where $P$ is a partition of the vertex set $V$ such that 
\begin{enumerate}
    \item for $X \in P$, and vertices $u,v \in X$ we have  $\mathbf{dist}_{G \setminus E(S_{Split})}(u,v,S) \leq d$, and
    \item for distinct $X, Y \in P$, with vertices $u \in X$ and $v \in Y$,   $u \not\rightleftarrows_{G \setminus E(S_{Split})} v$, and
    \item 
    $|S_{Split}| \leq  \frac{32 \log n}{d} \sum_{X \in P}  \lg (n  - |X \cap S|) |X \cap S|$
\end{enumerate}
The algorithm runs in time $O\left(d \sum_{X \in P} (1 + \lg( n  - |X|)) E(X) \right)$.
\end{lemma}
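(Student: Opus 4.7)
The three correctness properties in the lemma statement coincide with those of Claim~\ref{clm:SplitCorrectness}, so the only remaining task is the running-time bound. My plan is strong induction on $|V|$: bound the non-recursive work performed in a single invocation of $\textsc{Split}$, attribute each unit of work either to an SCC of $P$ added at this level or to the input edges of a recursive subcall, and then close via the inductive hypothesis applied to the latter.

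To account for the non-recursive work of a single call $\textsc{Split}(G,S,d)$, I would treat the two branches separately. In any iteration entering the if-branch on line~\ref{line:split-if-case}, Claim~\ref{clm:twowaysep} bounds the parallel-separator cost by $O(E(V'_{Sep}))$; I would absorb this cost into the subsequent recursive call on $G'[V'_{Sep}]$, whose own inductive bound already carries a multiplicative factor of $d$. When the else-branch on line~\ref{line:split-else-case} is entered, Lemma~\ref{lma:SimpleGES} bounds the total GES-tree cost by $O(E(V_{\star})\cdot d)$, where $V_{\star}$ is $V(G')$ at the moment of entry, and every interior $\textsc{InSeparator}/\textsc{OutSeparator}$ call costs $O(E(V''_{Sep}))$ by Lemma~\ref{lma:sep}, which is again absorbed by the corresponding recursion on $G[V''_{Sep}]$.

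The crucial step is the handling of the $O(E(V_{\star})\cdot d)$ term. Claim~\ref{clm:largeSCCifEStree} gives $|X^{\star}|\ge |V_{\star}|/3$ for the SCC $X^{\star}$ added on line~\ref{lne:addToP2}; and $V_{\star}$ decomposes as the disjoint union of $X^{\star}$ with the pruned pieces $V''_{Sep,\mathrm{iter}}\cup S''_{Sep,\mathrm{iter}}$. I would therefore bound $E(V_{\star}) \le E(X^{\star}) + \sum_{\mathrm{iter}} E(V''_{Sep,\mathrm{iter}})$, charge $O(d\cdot E(X^{\star}))$ directly to $X^{\star}\in P$, and absorb each $O(d\cdot E(V''_{Sep,\mathrm{iter}}))$ summand into the corresponding recursive call. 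Since every recursive subcall operates on at most $\tfrac{2}{3}|V|$ vertices (by the if-conditions together with Claim~\ref{clm:largeSCCifEStree}), the recursion tree has geometrically decreasing input sizes, and each edge of an SCC $X\in P$ is touched at $O(\log(n/|X|))$ levels of recursion, contributing an $O(d)$ charge per level.

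Finally, to convert the natural recursion-depth bound $O(\log(n/|X|))$ into the stated $O(1+\lg(n-|X|))$, I would split cases: if $|X|\le n/2$ then $1+\lg(n-|X|)\ge \lg(n/2) = \Omega(\log(n/|X|))$; and if $|X|>n/2$ then $\log(n/|X|)<1$ is absorbed into the ``$1+$'' term. The main obstacle I expect is the careful bookkeeping of the ``absorption'' arguments, since $E(V'_{Sep})$ includes cross-edges that do not lie inside $G'[V'_{Sep}]$ and so are not directly paid for by the recursion. These cross-edges must be charged instead to the singleton SCCs $\{s\}$ for $s\in S'_{Sep}$ (respectively $S''_{Sep}$) added to $P$ on line~\ref{lne:addToPSep} (respectively line~\ref{lne:addToP1}), and one has to verify that these singletons can afford the full $O(d\cdot(1+\lg(n-1)))$ charge that the target bound allows them.
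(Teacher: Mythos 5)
Your proposal is correct and follows essentially the same route as the paper's proof: correctness is inherited from Claim~\ref{clm:SplitCorrectness}, the non-recursive work of one invocation is bounded by $O(|E(G)|\cdot d)$ via Claim~\ref{clm:twowaysep}, Lemma~\ref{lma:sep} and the GES bound, and the logarithmic factor comes from Claim~\ref{clm:largeSCCifEStree} guaranteeing that all recursive calls are on disjoint subgraphs with at most $\tfrac{2}{3}|V|$ vertices. The paper is terser (it simply charges $O(|E(G)|\cdot d)$ per recursion level and multiplies by the depth), whereas you make the charging explicit; your one inexact step, $E(V_\star)\le E(X^\star)+\sum_{\mathrm{iter}}E(V''_{Sep,\mathrm{iter}})$, omits edges internal to the separator sets, but the fix you already describe — charging edges incident to separator vertices to the singleton SCCs $\{s\}\in P$ — covers exactly that missing term.
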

\begin{proof}
Since correctness was established in lemma \ref{clm:SplitCorrectness}, it only remains to bound the running time of the procedure. Let us first bound the running time without recursive calls to procedure $\textsc{Split}(G, S,d)$. To see that we only spend $O(|E(G)| d)$ time in $\textsc{Split}(G, S,d)$ excluding recursive calls, observe first that we can find each separator tuple $(S'_{Sep}, V'_{Sep})$ in time $O(E(V'_{Sep}))$ by claim \ref{clm:twowaysep}. We then, either recurse on $G'[V'_{Sep}])$ and remove the vertices $V'_{Sep} \cup S'_{Sep}$ with their incident edges from $G'$ or we enter the else-case (line \ref{line:split-else-case}). Clearly, if our algorithm never visits the else-case, we only spend time $O(|E(G)|)$ excluding the recursive calls since we immediately remove the edge set that we found in the separator from the graph. 

We further observe that the running time for the GES-tree can be bounded by $O(|E(G)| d)$. The time to compute the separators to prune vertices away from the GES-tree is again combined at most $O(|E(G)|)$ by lemma \ref{lma:sep} and the observation that we remove edges from the graph $G$ after they were scanned by one such separator procedure.

We already discussed that claim \ref{clm:largeSCCifEStree} implies that we only recurse on disjoint subgraphs with at most $\frac{2}{3}|V|$ vertices. We obtain that each vertex in a final SCC $X$ in $P$ participated in at most $O(\log( n - |X|))$ levels of recursion and so did its incident edges hence we can then bound the total running time by $O\left(d \sum_{X \in P} (1 + \log( n  - |X|)) E(X) \right)$.
\end{proof}

\section{Handling deletions}
\label{subsec:delete}

Let us now consider how to process the deletion of an edge $(u,v)$ which we describe in pseudo code in algorithm \ref{alg:delete}. We fix our data structure in a bottom-up procedure where we first remove the edge $(u,v)$ if it is contained in any induced subgraph $\hat{G}_i[X]$ from the GES $\mathcal{E}_{\textsc{Center}(X)}$. 

\begin{algorithm}
\caption{$\textsc{Delete}(u,v)$}
\label{alg:delete}
\KwIn{An edge $(u,v) \in E$.}
\KwResult{Updates the data structure such that queries for the graph $G \setminus \{ (u,v)\}$ can be answered in constant time.}
\BlankLine

\For{ $i = 0 $ \KwTo $ \lfloor \log{n} \rfloor$}{
    \If{If there exists an $X \in \hat{V}_{i+1}$ with $u,v \in X$}{
        $\mathcal{E}_{\textsc{Center}(X)}.\textsc{Delete}(u,v)$\;
    } 
    \While{there exists an $X \in \hat{V}_{i+1}$ with $\mathcal{E}_{\textsc{Center}(X)}.\textsc{GetUnreachable}() \neq \bot$}{
        $X' \gets \mathcal{E}_{\textsc{Center}(X)}.\textsc{GetUnreachable}()$\;
        
        \tcc{Find a separator from $X'$ depending on whether $X'$ is far to reach from $r$ or the other way around.}
        \If{$\mathcal{E}_{\textsc{Center}(X)}.\textsc{Distance}(\textsc{Center}(X),X') > \delta$}{
            $(S_{Sep}, V_{Sep}) \gets \textsc{InSeparator}(X', \hat{G}_i[X], X \cap S_i, \delta/2)$ \label{lne:DelSepIn}
        }
        \Else(\tcp*[h]{$\mathcal{E}_{\textsc{Center}(X)}.\textsc{Distance}(X', \textsc{Center}(X)) > \delta$}){
            $(S_{Sep} , V_{Sep}) \gets \textsc{OutSeparator}(X' , \hat{G}_i[X] , X \cap S_i , \delta/2)$\label{lne:DelSepOut}
        }

        \tcc{If the separator is chosen such that $V_{Sep}$ is small, we have a good separator, therefore we remove $V_{Sep}$ from $\mathcal{E}_r$ and maintain the SCCs in $\hat{G}_{i}[V_{Sep}]$ separately. Otherwise, we delete the entire GES $\mathcal{E}_{\textsc{Center}(X)}$ and partition the graph with a good separator.}
        \If{$|\textsc{Flatten}(V_{Sep})| \leq \frac{2}{3}|\textsc{Flatten}(X)|$}{
            $\mathcal{E}_{\textsc{Center}(X)}.\textsc{Delete}(V_{Sep}\cup S_{Sep})$\;
            $(S'_{Sep}, P') \gets \textsc{Split}(\hat{G}[V_{Sep}], V_{Sep} \cap S_i, \delta/2)$\label{lne:DelSplit1}\;
            $S''_{Sep} \gets S_{Sep} \cup S'_{Sep}$\;
            $P'' \gets P' \cup  S_{Sep}$\;
        }
        \Else{
            $\mathcal{E}_{\textsc{Center}(X)}.\textsc{Delete}()$\label{lne:ESdelete}\;
            $(S''_{Sep}, P'') \gets \textsc{Split}(\hat{G}_i[X], X \cap S_i, \delta/2)$\label{lne:DelSplit2}\;
        }

        \tcc{After finding the new partitions, we init them, execute the vertex splits on the next level and add the separator vertices.}
        $\textsc{InitNewPartition}(P'', i, \delta)$\;
        
        \ForEach{$Y \in P''$}{
            $\mathcal{E}_{\textsc{Center}(X)}.\textsc{SplitNode}(Y)$\;
        }
        $\mathcal{E}_{\textsc{Center}(X)}.\textsc{Augment}(S''_{Sep})$\label{lne:augmentInDelete}\;
        $S_{i+1} \gets S_{i+1} \cup S''_{Sep}$\;
    }
}
\end{algorithm}

Then, we check if any GES $\mathcal{E}_{\textsc{Center}(X)}$ on a subgraph $\hat{G}_i[X]$ contains a node that became unreachable due to the edge deletion or the fixing procedure on a level below. Whilst there is such a GES $\mathcal{E}_{\textsc{Center}(X)}$, we first find a separator $S_{Sep}$ from $X'$ in lines \ref{lne:DelSepIn} or \ref{lne:DelSepOut}. We now consider two cases based on the size of the set $\textsc{Flatten}(V_{Sep})$. Whilst focusing on the size of $\textsc{Flatten}(V_{Sep})$ instead of the size of $V_{Sep}$ seems like a minor detail, it is essential to consider the underlying vertex set instead of the node set, since the node set can be further split by node split updates from lower levels.

Now, let us consider the first case, when the set $V_{Sep}$ separated by $S_{Sep}$ is small (with regard to $\textsc{Flatten}(V_{Sep})$); in this case, we simply prune $V_{Sep}$ from our tree by adding $S_{Sep}$ to $S_{i+1}$, and then invoke $\textsc{Split}(\hat{G}_i[V_{Sep}], V_{Sep} \cap S_i, \delta/2)$ to get a collection of subgraphs $P'$ where each subgraph $Y \in P'$ has every pair of nodes $A, B \in Y$ at $S_{i}$-distance $\delta/2$. (We can afford to invoke $\textsc{Split}$ on the vertex set $V_{Sep}$ because we can afford to recurse to on the smaller side of a cut.)

The second case is when $V_{Sep}$ is large compared to the number of vertices in node set of the GES-tree. In this case we do not add $S_{Sep}$ to $S_{i+1}$. Instead we we declare the GES-tree $\mathcal{E}_{\textsc{Center}(X)}$ invalid, and delete the entire tree. We then partition the set $X$ that we are working with by invoking the $\textsc{Split}$ procedure on all of $X$. (Intuitively, this step is expensive, but we will show that whenever it occurs, there is a constant probability that the graph has decomposed into smaller SCCs, and we have thus made progress.)


Finally, we use the new partition and construct on each induced subgraph a new GES-tree at a randomly chosen center. This is done by the procedure $\textsc{InitNewPartition}(P', i, \delta)$ that was presented in subsection \ref{subsec:Preprocessing}. We then apply the updates to the graph $\hat{G}_{i+1}$ using the GES-tree operations defined in lemma \ref{lma:AugmentedGES}. Note, that we include the separator vertices as singleton sets in the partition and therefore invoke $\mathcal{E}_X.\textsc{SplitNode}(\cdot)$ on each singleton before invoking $\mathcal{E}_X.\textsc{Augment}(S''_{Sep})$ which ensures that the assumption from lemma \ref{lma:AugmentedGES} is satisfied. As in the last section, let us prove the following two lemmas whose proofs will further justify some of the details of the algorithm. 

We start by showing that because we root the GES-tree for SCC $X$ at a \emph{random} root $r$, if the GES-tree ends up being deleted in \ref{lne:ESdelete} in algorithm \ref{alg:delete}, this means that with constant probability $X$ has decomposed into smaller SCCs, and so progress has been made.

\begin{lemma}[c.f. also \cite{chechik2016decremental}, Lemma 13]
\label{lma:EStreeprob}
Consider an GES $\mathcal{E}_r = \mathcal{E}_{\textsc{Center}(X)}$ that was initialized on the induced graph of some node set $X_{Init}$, with $X \subseteq X_{Init}$, and that is deleted in line \ref{lne:ESdelete} in algorithm \ref{alg:delete}. Then with probability at least $\frac{2}{3}$, the partition $P''$ computed in line \ref{lne:DelSplit2} satisfies that each $X' \in P''$ has $|\textsc{Flatten}(X')| \leq \frac{2}{3}|\textsc{Flatten}(X_{Init})|$.
\end{lemma}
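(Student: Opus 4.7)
The plan is to combine a deterministic structural claim with the standard random-root probability argument. First I would show that on any execution which reaches line~\ref{lne:ESdelete} and produces a part $X^* \in P''$ with $|\textsc{Flatten}(X^*)| > \frac{2}{3}|\textsc{Flatten}(X_{Init})|$, the center $r$ must lie outside $\textsc{Flatten}(X^*)$. Then, using the fact that $r$ is drawn uniformly from $\textsc{Flatten}(X_{Init})$ and that the adversary is oblivious, I would conclude that such a big part can appear with conditional probability at most $1/3$.

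\textbf{Structural step.} To prove the structural claim I would argue by contradiction, assuming $r \in \textsc{Flatten}(X^*)$. By property~1 of Lemma~\ref{lma:splitFull} applied to the $\textsc{Split}$ call in line~\ref{lne:DelSplit2}, every pair of nodes in $X^*$ has $S_i$-distance at most $\delta/2$ in $\hat{G}_i[X] \setminus E(S''_{Sep})$, and therefore also in the denser graph $\hat{G}_i[X]$. In particular, every $u \in X^*$ satisfies $\max\{\mathbf{dist}_{\hat{G}_i[X]}(r,u,S_i),\mathbf{dist}_{\hat{G}_i[X]}(u,r,S_i)\} \leq \delta/2$. Next I would inspect the unreachable node $X'$ and the separator $(S_{Sep},V_{Sep})$ computed in lines~\ref{lne:DelSepIn}--\ref{lne:DelSepOut}. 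WLOG $\mathbf{dist}(r,X',S_i) > \delta$, so the algorithm runs $\textsc{InSeparator}(X',\hat{G}_i[X], X\cap S_i, \delta/2)$, and by property~\ref{step:separator-distance} of Lemma~\ref{lma:sep} every $v \in V_{Sep}$ satisfies $\mathbf{dist}(v,X',S_i) \leq \delta/2$. A node $v \in X^* \cap V_{Sep}$ would then yield, by the additivity of $S_i$-distance along the concatenated path $r \to v \to X'$ inside $\hat{G}_i[X]$, $\mathbf{dist}(r,X',S_i) \leq \delta/2 + \delta/2 = \delta$, contradicting $\mathbf{dist}(r,X',S_i) > \delta$. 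Hence $V_{Sep} \cap X^* = \emptyset$, which gives $|\textsc{Flatten}(V_{Sep})| \leq |\textsc{Flatten}(X)| - |\textsc{Flatten}(X^*)| < |\textsc{Flatten}(X)| - \frac{2}{3}|\textsc{Flatten}(X)| = \frac{1}{3}|\textsc{Flatten}(X)|$, where I used $|\textsc{Flatten}(X)| \leq |\textsc{Flatten}(X_{Init})|$. But then the algorithm takes the \emph{if}-branch around line~\ref{lne:DelSplit1} instead of the \emph{else}-branch, so $\mathcal{E}_r$ cannot be deleted in line~\ref{lne:ESdelete}, contradicting our assumption.

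\textbf{Probability step and main obstacle.} With the structural claim in hand, the plan is to convert it into a conditional probability bound. Because the adversary is oblivious and $r$ is uniform on $\textsc{Flatten}(X_{Init})$, the fraction of roots $r$ that can serve as a witness of a big $X^*$ is at most $1 - |\textsc{Flatten}(X^*)|/|\textsc{Flatten}(X_{Init})| < 1/3$, which should imply the desired $2/3$ bound after conditioning on the event that $\mathcal{E}_r$ is deleted in line~\ref{lne:ESdelete}. The delicate point, and the main obstacle, is that both the event ``$\mathcal{E}_r$ is deleted in line~\ref{lne:ESdelete}'' and the resulting $X^*$ depend on the algorithm's trajectory, which in turn depends on $r$ through the history of prunings that shaped the node set $X$. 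To turn the clean deterministic implication ``big part $\Rightarrow r \notin \textsc{Flatten}(X^*)$'' into an honest conditional probability bound, I would follow the averaging / principle-of-deferred-decisions argument used by Chechik et al.~\cite{chechik2016decremental} in their analogous Lemma~13, which handles exactly this coupling between the random root and the execution history.
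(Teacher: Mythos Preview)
Your structural step is essentially correct and mirrors the key distance argument in the paper's proof: nodes pruned by the separator in lines~\ref{lne:DelSepIn}/\ref{lne:DelSepOut} all lie at $S_i$-distance greater than $\delta/2$ from $r$, so they cannot intersect any set of $S_i$-diameter at most $\delta/2$ containing $r$. The gap is in the probability step, which you correctly flag as the obstacle but do not resolve. The implication ``big $X^* \Rightarrow r\notin\textsc{Flatten}(X^*)$'' does not by itself give a probability bound, because $X^*$ is a function of $r$: different roots produce different sequences of prunings, hence a different node set $X$ at the moment of deletion and a different partition $P''$. Deferring to the averaging argument of~\cite{chechik2016decremental} is not enough here; that argument (and the paper's version of it) does not start from an $r$-dependent witness like your $X^*$, and nothing in your outline explains what fixed set the random root should be compared against.

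The paper fills exactly this gap by constructing a \emph{root-independent} reference set. It separates the updates to $\hat{G}_i[X_{Init}]$ into a root-independent stream $u_1,u_2,\dots$ (adversarial deletions, additions to $S_i$, and node splits propagated from lower levels) and the root-dependent separator updates. For each prefix $u_1,\dots,u_j$ it defines $X_{max}^j$, the largest set of $S_i$-diameter at most $\delta/2$ in the graph with only the root-independent updates applied, and picks the last index $k$ with $|\textsc{Flatten}(X_{max}^{k-1})|>\tfrac{2}{3}|\textsc{Flatten}(X_{Init})|$. Since $X_{max}^{k-1}$ does not depend on $r$, the event $r\in\textsc{Flatten}(X_{max}^{k-1})$ has probability exceeding $2/3$ directly by uniformity. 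Your distance argument is then reused, but in the forward direction: it shows that whenever $r\in X_{max}^{k-1}$, every separator pruning prior to update $u_k$ is disjoint from $X_{max}^{k-1}$, so the if-branch is always taken and the tree survives past $u_k$; once past $u_k$, any $X'\in P''$ has $S_i$-diameter at most $\delta/2$ in a subgraph of $G^k$ and hence $|\textsc{Flatten}(X')|\le|\textsc{Flatten}(X_{max}^k)|\le\tfrac{2}{3}|\textsc{Flatten}(X_{Init})|$. The explicit construction of $X_{max}^{k-1}$ is the missing idea in your outline.
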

\begin{proof}
Let $i$ be the level of our hierarchy on which $\mathcal{E}_{r}$ was initialized, i.e. $\mathcal{E}_{r}$ was initialized on graph $\hat{G}_i[X_{Init}]$, and went up to depth $\delta$ with respect to $S_i$-distances (see Algorithm \ref{alg:newPart}). 

Let $u_1, u_2, ..$ be the sequence of updates since the GES-tree $\mathcal{E}_r$ was initialized that were either adversarial edge deletions, nodes added to $S_i$ or node splits in the graph $\hat{G}_i[X_{Init}]$. Observe that this sequence is independent of how we choose our random root $r$, since they occur at a lower level, and so do not take any GES-trees at level $i$ into account. Recall, also, that the adversary cannot learn anything about $r$ from our answers to queries because the SCCs of the graph are objective, and so do not reveal any information about our algorithm. We refer to the remaining updates on $\hat{G}_i[X_{Init}]$ as \textit{separator} updates, which are the updates adding nodes to $S_{i+1}$ and removing edges incident to $S_{i+1}$ or between nodes that due to such edge deletions are no longer strongly-connected. We point out that the separator updates are heavily dependent on how we chose our random source. The update sequence that the GES-tree undergoes up to its deletion in line \ref{lne:ESdelete} is a mixture of the former updates that are independent of our chosen root $r$ and the separator updates.

Let $G^j$ be the graph $\hat{G}_i$ after the update sequence $u_1, u_2, ..., u_j$ is applied. Let $X_{max}^j$ be the component of $S_i$-diameter at most $\delta/2$ that maximizes the cardinality of $\textsc{Flatten}(X_{max}^j)$ in $G^j$. We choose $X_{max}^j$ in this way because we want to establish an upper bound on the largest SCC of $S_i$-diameter at most $\delta/2$ in $G^j$. We then show that that if a randomly chosen source deletes a GES-tree (see line \ref{lne:ESdelete}) after $j$ updates, then there is a good probability that $X_{max}^j$ is small. Then by the guarantees of lemma \ref{lma:split}, the $\textsc{Split}(\cdot)$ procedure in line \ref{lne:DelSplit2} partitions the vertices into SCCs $X'$ of $S_i$-diameter at most $\delta/2$, which all have small $|\textsc{Flatten}(X')|$ because $X_{max}^j$ is small. 

More precisely, let $G^j_r$, be the graph is obtained by applying \emph{all} updates up to update $u_j$ to $\hat{G}_i[X_{Init}]$; here we include the updates $u_1, ..., u_j$, as well as all separator updates up to the time when $u_j$ takes place. (Observe that $G^j$ is independent from the choice of $r$, but $G^j_r$ is not.) Let $X_{max, r}^j$ be the component of $S_i$-diameter at most $\delta/2$ that maximizes the cardinality of $\textsc{Flatten}(X^j_{max, r})$ in this graph $G^j_r$. It is straight-forward to see that since $S_i$-distances can only increase due to separator updates, we have $|\textsc{Flatten}(X_{max, r}^j)| \leq |\textsc{Flatten}(X_{max}^j)|$ for any $r$. Further $|\textsc{Flatten}(X_{max, r}^j)|$ upper bounds the size of any component $X' \in P''$, i.e. $|\textsc{Flatten}(X')| \leq |\textsc{Flatten}(X_{max, r}^j)|$ if the tree $\mathcal{E}_r$ is deleted in line \ref{lne:ESdelete} while  handling update $u_j$; the same bound holds if $\mathcal{E}_r$ is deleted after update $u_j$, because the cardinality of $\textsc{Flatten}(X_{max, r}^j)$ monotonically decreases in $j$, i.e. $|\textsc{Flatten}(X_{max, r}^{j})| \leq |\textsc{Flatten}(X_{max, r}^{j-1})|$ since updates can only increase $S_i$-distances. 

Now, let $k$ be the index, such that 
\[
|\textsc{Flatten}(X_{max}^k)| \leq \frac{2}{3}|\textsc{Flatten}(X_{Init})| < |\textsc{Flatten}(X_{max}^{k-1})|.
\]
i.e. $k$ is chosen such that after the update sequence $u_1, u_2,..., u_{k}$ were applied to $\hat{G}_i[X_{Init}]$, there exists no SCC $X$ in $G^k$ of diameter at most $\delta/2$ with $|\textsc{Flatten}(X)| > \frac{2}{3}|\textsc{Flatten}(X_{Init})|$. 

In the remainder of the proof, we establish the following claim: if we chose some vertex $r \in \textsc{Flatten}(X^{k-1}_{max})$, then the GES-tree would not be been deleted before update $u_k$ took place. Before we prove this claim, let us point out that this implies the lemma: observe that by the independence of how we choose $r$ and the update sequence $u_1, u_2, ..$, we have that 
\[
Pr[r \in X_{max}^{k-1} | u_1, u_2, ..] = Pr[r \in X_{max}^{k-1}] = \frac{|\textsc{Flatten}(X_{max}^{k-1})|}{|\textsc{Flatten}(X_{Init})} > \frac{2}{3}
\]
where the before-last equality follows from the fact that we choose the root uniformly at random among the vertices in $\textsc{Flatten}(X_{Init})$. Thus, with probability at least $\frac{2}{3}$, we chose a root whose GES-tree is deleted during or after the update $u_k$ and therefore the invoked procedure $\textsc{Split}(\cdot)$ ensures that every SCC $X' \in P''$ satisfies $|\textsc{Flatten}(X')| \leq  |\textsc{Flatten}(X_{max}^k)| \leq \frac{2}{3}|\textsc{Flatten}(X_{Init})|$, as required. 

Now, let us prove the final claim. We want to show that if $r \in X^{k-1}_{max}$, then the GES-tree would not have been deleted before update $u_k$. To do so, we need to show that even if we include the separator updates, the SCC containing $r$ continues to have size at least $\frac{2}{3}|\textsc{Flatten}(X_{Init})|$ before update $u_k$. In particular, we argue that before update $u_k$, none of the separator updates decrease the size of $X^{k-1}_{max}$. The reason is that the InSeparator computed in Line
\ref{lne:DelSepIn} of Algorithm \ref{alg:delete} is always run from a node $X$ whose $S_i$-distance from $r$ is at least $\delta$. (The argument for an OutSeparator in Line \ref{lne:DelSepOut} is analogous.)
Now, the InSeparator from $X$ is computed up to $S_i$-distance $\delta/2$, so by Property \ref{step:separator-distance} of Lemma \ref{lma:sep}, we have that all nodes pruned away from the component have $S_i$-distance at most $\delta/2$ to $X$; this implies that these nodes have $S_i$-distance more than $\delta/2$ from $r$, and so cannot be in $X^{k-1}_{max}$, because $X^{k-1}_{max}$ was defined to have $S_i$-diameter at most $\delta/2$. Thus none of the separator updates affect $X^{k-1}_{max}$ before update $u_k$, which concludes the proof of the lemma.


\end{proof}

Next, let us analyze the size of the sets $S_i$. We analyze $S_i$ using the inequality below in order to ease the proof of the lemma. We point out that the term $\lg(n -|X \cup S_i|)$ approaches $\lg n$ as the SCC $X$ splits further into smaller pieces. Our lemma can therefore be stated more easily, see therefore corollary \ref{cor:SisSmall}. 

\begin{lemma}
\label{lma:setS}
During the entire course of deletions our algorithm maintains 
\begin{align}
|S_0| &= n                    & \\
|S_{i+1}| &\leq  \frac{32 \log n}{\delta} \sum_{X \in \hat{V}_{i}}  \lg (n  - |X \cap S_{i}|) |X \cap S_{i}| &  \text{for }i \geq 0
\end{align}
\end{lemma}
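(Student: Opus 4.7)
The first equality $|S_0| = n$ is immediate from the initialization $S_0 \gets V$ on line~1 of Algorithm~\ref{alg:preprocessing}. The bound on $|S_{i+1}|$ is more subtle and I plan to establish it by tracking every moment at which vertices are added to $S_{i+1}$.

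Vertices enter $S_{i+1}$ in exactly two locations in our code. During $\textsc{Preprocessing}$, the assignment $S_{i+1} \gets S_{Sep}$ populates $S_{i+1}$ from the output of $\textsc{Split}(\hat{G}_i, S_i, \delta/2)$. During $\textsc{Delete}$, inside the inner while loop of Algorithm~\ref{alg:delete}, the set $S''_{Sep}$ is unioned into $S_{i+1}$; this set is either the union of an In/OutSeparator call (on $\hat{G}_i[X]$ to depth $\delta/2$) together with a Split call on $\hat{G}_i[V_{Sep}]$ (small branch), or just the output of a Split call on $\hat{G}_i[X]$ (large branch). For each such event I plan to invoke Lemma~\ref{lma:splitFull}(3) on Split contributions and Lemma~\ref{lma:sep}(3) on In/Out\-Separator contributions. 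Each yields an upper bound of the form $\tfrac{O(\log n)}{\delta}\sum_{Y}\lg(n-|Y\cap S_i|)\,|Y\cap S_i|$, where the sum ranges over the partition returned by the call, which is a refinement of the restriction of $\hat{V}_i$ to the subgraph being processed.

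The key structural observation that aggregates these local bounds into the single global bound is that
\[
\Phi \;:=\; \sum_{X \in \hat{V}_i} \lg\!\bigl(n - |X \cap S_i|\bigr)\,|X \cap S_i|
\]
is monotonically non-decreasing over the course of the algorithm. Indeed, whenever a node $X \in \hat{V}_i$ splits into sub-nodes $X_1, \dots, X_k$, we have $|X_j \cap S_i| \leq |X \cap S_i|$, hence $\lg(n - |X_j \cap S_i|) \geq \lg(n - |X \cap S_i|)$, so
\[
\sum_{j=1}^{k} \lg\!\bigl(n - |X_j \cap S_i|\bigr)\,|X_j \cap S_i| \;\geq\; \lg\!\bigl(n - |X \cap S_i|\bigr)\sum_{j=1}^{k} |X_j \cap S_i| \;=\; \lg\!\bigl(n - |X \cap S_i|\bigr)\,|X \cap S_i|,
\]
and similarly promotions from $S_{i-1}$ into $S_i$ only shrink the relevant $|X \cap S_i|$ and therefore do not decrease $\Phi$. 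Because of this monotonicity, each local Split/Separator contribution, whose size is bounded against the (refined) partition it returns, can be re-expressed as a contribution to the current global $\Phi$. Summing the local bounds over all Split and In/OutSeparator calls that ever touch level $i$ then yields an upper bound of the claimed form on $|S_{i+1}|$.

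The main technical obstacle is pinning down the constant $\tfrac{32 \log n}{\delta}$ precisely: treating $\textsc{Split}$ as a black box via Lemma~\ref{lma:splitFull}(3) with parameter $d = \delta/2$ would only yield $\tfrac{64 \log n}{\delta}$, so to recover the sharper stated factor one must reach into the charging argument inside Lemma~\ref{lma:splitFull}, observing that the inner In/OutSeparator calls run at depth $d/16 = \delta/32$ and that each $S_i$-vertex is charged only when it lies on the smaller $S_i$-side of a cut, with the per-charge cost $\tfrac{2\log n}{\delta/32}$ absorbed directly into the factor $32$. This same charging structure also supplies the $\lg(n - |X \cap S_i|)$ factor, since each $S_i$-vertex can be charged at most once per halving of the $S_i$-content of its side.
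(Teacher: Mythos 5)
Your approach is essentially the paper's: maintain the inequality as an invariant by charging each batch of vertices added to $S_{i+1}$ --- via property~\ref{prop:balanceS} of Lemma~\ref{lma:sep} for the In/Out-separator calls in Algorithm~\ref{alg:delete} and the third property of Lemma~\ref{lma:splitFull} for the \textsc{Split} calls --- against the increase of the right-hand side, which is monotone under node splits and under growth of $S_i$. One step of your monotonicity argument is backwards, however: promoting vertices from $S_{i-1}$ into $S_i$ \emph{increases} $|X \cap S_i|$, it does not shrink it. The reason the term $\lg(n-|X\cap S_i|)\,|X\cap S_i|$ still does not decrease is that $s \mapsto \lg(n-s)\,s$ is increasing for $s \le n/2$ (applicable since $|S_i|\le n/2^i$); this is exactly how the paper handles the augmentation in line~\ref{lne:augmentInDelete}. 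Also be careful to phrase the aggregation as a per-event telescoping of increments of $\Phi$ rather than as a sum of the full local bounds over all calls (the latter would count $\Phi$ once per call); your closing remark that each $S_i$-vertex is charged once per halving of the $S_i$-content of its side shows you have the right charging in mind. Your point about the constant is fair: invoking Lemma~\ref{lma:splitFull} as a black box with $d=\delta/2$ gives $64\log n/\delta$, and the paper itself glosses over this by simply citing that lemma.
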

\begin{proof}
We prove by induction on $i$. It is easy to see that $S_0$ has cardinality $n$ since we initialize it to the node set in procedure \ref{alg:preprocessing}, and since each set $S_i$ is an increasing set over time.

Let us therefore focus on $i > 0$. Let us first assume that the separator nodes were added by the procedure $\textsc{OutSeparator}(\cdot)$ (analogously $\textsc{InSeparator}(\cdot)$). Since the procedure is invoked on an induced subgraph $\hat{G}_i[X]$ that was formerly strongly-connected, we have that either $V_{Sep}$ or $X \setminus (V_{Sep} \cup S_{Sep})$ (or both) contain at most half the $S_i$-nodes originally in $X$. Let $Y$ be such a side. Since adding $S_{Sep}$ to $S_i$ separates the two sides, we have that RHS of the equation is increased by at least $\frac{32 \log n}{\delta} |Y \cap S_i|$ since $\lg( n - |Y \cap S_i|) |Y \cap S_i| - \lg( n - |X \cap S_i|) |Y \cap S_i| \geq |Y \cap S_i|$. Since we increase the LHS by at most $\frac{4 \log n}{\delta} |Y \cap S_i|$ by the guarantees in lemma \ref{lma:sep}, the inequality is still holds.

Otherwise, separator nodes were added due to procedure $\textsc{Split}(\cdot)$. But then we can straight-forwardly apply lemma \ref{lma:splitFull} which immediately implies that the inequality still holds.

Finally, the hierarchy might augment the set $S_i$ in line \ref{lne:augmentInDelete}, but we observe that $f(s) = \lg(n - s) * s$ is a function increasing in $s$ for $s \leq \frac{1}{2} n$ which can be proven by finding the derivative. Thus adding nodes to the set $S_i$ can only increase the RHS whilst the LHS remains unchanged.
\end{proof}

\begin{corollary}
\label{cor:SisSmall}
During the entire course of deletions, we have $|S_{i+1}| \leq \frac{16 \lg^2 n}{\delta} |S_{i}|$, for any $i \geq 0$. 
\end{corollary}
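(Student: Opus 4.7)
My plan is to derive this corollary as a short arithmetic consequence of Lemma~\ref{lma:setS}, relying on the partition structure of $\hat{V}_i$.

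First I would invoke Lemma~\ref{lma:setS}, which gives
\[
|S_{i+1}| \;\leq\; \frac{32 \log n}{\delta} \sum_{X \in \hat{V}_i} \lg\!\bigl(n - |X \cap S_i|\bigr) \cdot |X \cap S_i|.
\]
Next, since $\hat{V}_i$ is a partition of $V$ (each vertex of $V$ lies in exactly one node of $\hat{G}_i$), every element of $S_i$ is counted exactly once across the nodes $X \in \hat{V}_i$, so $\sum_{X \in \hat{V}_i} |X \cap S_i| = |S_i|$. Finally, I would use the crude bound $\lg(n - |X \cap S_i|) \leq \lg n$ to pull a $\lg n$ factor out of the sum, yielding
\[
|S_{i+1}| \;\leq\; \frac{32 \log n}{\delta} \cdot \lg n \cdot |S_i| \;=\; \frac{32 \lg^2 n}{\delta}\,|S_i|.
\]

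This essentially matches the claim, up to the constant: a direct application of the trivial logarithm bound yields constant $32$, whereas the corollary asserts $16$. The factor-of-two improvement likely comes from observing that terms with $|X \cap S_i| = 0$ contribute nothing to the sum, so only terms with $|X \cap S_i| \geq 1$ matter and there one has $\lg(n - |X \cap S_i|) \leq \lg(n-1)$; alternatively it follows by refining the $32$ in Lemma~\ref{lma:setS} itself, since the proof of that lemma actually charges only $\tfrac{4 \log n}{\delta}|Y \cap S_i|$ per cut (which is a factor $8$ tighter than the constant written in the lemma statement). I do not foresee any real obstacle here: the proof is a one-line chain of inequalities built directly on Lemma~\ref{lma:setS}, and the only modest subtlety is carefully accounting for the constant to match the stated bound.
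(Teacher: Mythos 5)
Your derivation is exactly the intended one---the paper states the corollary without proof as the immediate consequence of Lemma~\ref{lma:setS}, obtained precisely by bounding $\lg(n-|X\cap S_i|)\le\lg n$ and using that $\sum_{X\in\hat{V}_i}|X\cap S_i|=|S_i|$ since $\hat{V}_i$ partitions $V$. The resulting constant is indeed $32$ rather than the stated $16$; this is an inconsistency in the paper's own constants (compare the $\frac{32\lg^2 n}{\delta}|S_i|$ bound in Lemma~\ref{lma:split}) rather than a gap in your argument, and it is harmless because the only downstream use, with $\delta=64\lg^2 n$, needs merely $|S_{i+1}|\le|S_i|/2$, which the constant $32$ already delivers.
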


\section{Putting it all together}
\label{sec:alltogether}

By corollary \ref{cor:SisSmall}, using $\delta = 64 \lg^2 n$, we enforce that each $|S_i| \leq n/2^i$, so $\hat{G}_{\lfloor \lg{n} \rfloor + 1}$ is indeed the condensation of $G$. Thus, we can return on queries asking whether $u$ and $v$ are in the same SCC of $G$, simply by checking whether they are represented by the same node in $\hat{G}_{\lfloor \lg{n} \rfloor + 1}$ which can be done in constant time. 

We now upper bound the running time of our algorithm by $O(m \log^5 n)$ and then refine the analysis slightly to obtain the claimed running time of $O(m \log^4 n)$.

By lemma \ref{lma:EStreeprob}, we have that with probability $\frac{2}{3}$, that every time a node leaves a GES, its induced subgraph contains at most a fraction of $\frac{2}{3}$ of the underlying vertices of the initial graph. Thus, in expectation each vertex in $V$ participates on each level in $O(\log n)$ GES-trees. Each time it contributes to the GES-trees running time by its degree times the depth of the GES-tree which we fixed to be $\delta$. Thus we have expected time $O(\sum_{v \in V} \mathbf{deg}(v) \delta \log n) = O(m \log^3 n)$ to maintain all the GES-trees on a single level by lemma \ref{lma:AugmentedGES}. There are $O(\log n)$ levels in the hierarchy, so the total expected running time is bounded by $O(m \log^4 n)$.

By lemmas \ref{lma:splitFull}, the running time for invoking $\textsc{Split}(G[X], S, \delta/2)$ can be bounded by $O(E(X) \delta \log n) = O(E(X) \log^3 n)$. After we invoke  $\textsc{Split}(G[X], S, \delta/2)$ in algorithm \ref{alg:delete}, we expect with constant probability again by lemma \ref{lma:EStreeprob}, that each vertex is at most $O(\log n)$ times in an SCC on which the $\textsc{Split}(\cdot)$ procedure is invoked upon. We therefore conclude that total expected running time per level is $O(m \log^4 n)$, and the overall total is $O(m \log^5 n)$. 

Finally, we can bound the total running time incurred by all invocations of $\textsc{InSeparator}(\cdot)$ and $\textsc{OutSeparator}(\cdot)$ outside of $\textsc{Split}(\cdot)$ by the same argument and obtain total running time $O(m \log^2 n)$ since each invocation takes time $O(E(G))$ on a graph $G$.

This completes the running time analysis, establishing the total expected running time $O(m \log^5 n)$. We point out that the bottleneck of our algorithm are the invocations of $\textsc{Split}(\cdot)$. We can reduce the total expected cost of these invocations to $O(m \log^4 n)$ by using the more involved upper bound on the running time of $\textsc{Split}(\cdot)$ of 
\[
O\left(\delta E(X) + \delta \sum_{X \in P} \log( n  - |\textsc{Flatten}(X)|) E(X) \right)
\]
where we use $|\textsc{Flatten}(X)|$ instead of $|X|$ to capture node splits. Then, we can bound the costs incurred by the first part of the bound by $O(m \log^4 n)$; for the second part we also get a bound $O(m \log^4 n)$ by using a telescoping sum argument on the size of the graph.

This concludes our proof of theorem \ref{thm:SCCmain}.

\section{Conclusion}

In this article, we presented the first algorithm that maintains SCCs or SSR in decremental graphs in almost-linear expected total update time $\tilde{O}(m)$. Previously, the fastest algorithm for maintaining the SCCs or SSR achieved expected total update time $\tilde{O}(m \sqrt{n})$. Three main open questions arise in the context of our new algorithm:
\begin{itemize}
    \item Can the complexity of the Single-Source Shortest-Path problem in decremental directed graphs be improved beyond total expected update time of $O(mn^{0.9 + o(1)})$\cite{henzinger2014sublinear, henzinger2015improved} and can it even match the time complexity achieved by our algorithm? 
    \item Does there exist a \textit{deterministic} algorithm to maintain SCCs/SSR in a decremental graph beyond the $O(mn)$ total update time complexity?
    \item And finally, is there a algorithm that solves All-Pairs Reachability in fully-dynamic graphs with update time $\tilde{O}(m)$ and constant query time? Such an algorithm is already known for dense graphs\cite{demetrescu2004new} but is still unknown for graphs of sparsity $O(n^{2-\epsilon})$.  
\end{itemize}

\paragraph{Acknowledgements}

The second author of the paper would like to thank Jacob Evald for some helpful comments on organization and correctness.

\pagebreak

\printbibliography[heading=bibintoc] 

\pagebreak

\appendix

\section{Fully-Dynamic Reachability from many sources}
\label{sec:fullyReach}

\begin{theorem}
\label{thm:APReach}
Given a directed graph $G=(V,E)$, $S \subseteq V$, we can maintain a data structure that supports the operations:
\begin{itemize}
    \item $\textsc{Insert}(u,E_u)$: Adds $u$ to $V$ with edges $E_u$ where each edge in $E_u$ has an endpoint $u$.
    \item $\textsc{Delete}(u)$:  Deletes $u$ from $V$. 
    \item $\textsc{Query}(u,v)$: For $u \in S, v \in V$, returns whether $u$ has a path to $v$.
\end{itemize}
with expected amortized update time $O\left(\frac{|S|m\log^4{n}}{t}\right)$ and worst-case query time $O(t)$ for any $t \in [1,|S|]$. The bound holds against an oblivious adaptive adversary.
\end{theorem}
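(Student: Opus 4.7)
The plan is to apply a standard rebuilding reduction from the decremental algorithm of Theorem~\ref{thm:SCCmain} to the fully-dynamic setting. First, I would partition the sequence of updates into consecutive epochs of $t$ operations each, treating a vertex insertion or deletion as a bundle of its incident edge updates. At the start of every epoch I would take a snapshot $G_0$ of the current graph and initialize, for each $s \in S$, a decremental SSR data structure rooted at $s$ on $G_0$ using Theorem~\ref{thm:SCCmain}. During the epoch, each adversarial edge deletion is forwarded to every active SSR. Each edge insertion $(x,y)$ is appended to a list $I$ of insertions and also triggers the initialization of a fresh decremental SSR rooted at $y$ on the graph as it looks at the moment of insertion.

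To answer a query $\textsc{Query}(s,v)$ I would construct a small meta-graph $H$ whose vertex set consists of $s$, $v$, and the endpoints of the inserted edges in $I$. The edges of $H$ are the inserted edges themselves together with, for every ordered pair $(u,w)$ of meta-vertices for which $u$ has an associated SSR (which holds exactly when $u \in S$ or $u$ is the head of an inserted edge), a directed edge $u\to w$ whenever that SSR certifies $u \leadsto w$. A reachability computation in $H$ returns the answer. Correctness follows from decomposing any $s$-$v$ path in the current graph into maximal segments of edges present in $G_0$ (minus deletions performed in the epoch) separated by inserted edges, and pairing each such segment with the SSR rooted at its starting endpoint.

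For the running time, each SSR costs $O(m\log^4 n)$ in total by Theorem~\ref{thm:SCCmain}. In a single epoch we initialize $|S|$ SSRs at the start and at most $t$ further ones (one per insertion), for a total of $O((|S|+t)\,m\log^4 n)$; amortizing over the $t$ operations of the epoch and using $t \leq |S|$ gives expected amortized update time $O(|S|\,m\log^4 n / t)$. The meta-graph $H$ has $O(t)$ vertices, so by precomputing and maintaining the transitive closure of $H$ each time an insertion or endpoint-SSR is added (at cost $O(t)$ per operation, which is absorbed by the update bound), each query is answered in $O(t)$ time by a lookup or a BFS on $H$.

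The main obstacle is verifying correctness of the meta-graph despite the fact that the SSRs were initialized at different moments in time and therefore represent reachability in slightly different graphs. I would resolve this by time-ordering the SSRs and arguing inductively along a witness $s$-$v$ path: for every segment between two consecutive inserted edges $(x_j,y_j)$ and $(x_{j+1},y_{j+1})$ on the path, the segment uses only edges that are present in $G_0$ and have not yet been deleted, which is exactly the reachability information the SSR rooted at $y_j$ certifies; the initial and terminal segments (from $s$ and to $v$) are handled analogously by the SSRs rooted at $s$ and at the head of the last inserted edge on the path. A secondary but routine matter is to verify that the amortization is robust to vertex insertions carrying many edges; this is immediate because the update bound is stated per operation and the per-epoch SSR cost scales correctly with the current $m$.
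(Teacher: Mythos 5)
Your overall plan is the paper's: rebuild all $|S|$ decremental SSR structures every $t$ insertions, keep a small set of extra decremental structures for the vertices inserted mid-epoch, and stitch old and new reachability together at query time. However, two of your concrete choices break the claimed bounds. First, you root a fresh decremental SSR at the head of every inserted \emph{edge}. A single operation $\textsc{Insert}(u,E_u)$ may insert $|E_u|=\Theta(n)$ edges with distinct heads, so one operation can trigger $\Theta(n)$ initializations at $O(m\log^4 n)$ each; no amortization over $t\le |S|\le n$ operations recovers $O(|S|m\log^4 n/t)$, and your meta-graph then has far more than $O(t)$ vertices. The fix is to exploit that every edge of $E_u$ is incident to $u$: any path using a newly inserted edge must pass through the newly inserted vertex itself, so it suffices to root \emph{one} new structure per inserted vertex $u$ (the paper adds $u$ to $I$ and reinitializes only at $u$), keeping $|I|\le t$ and one $O(m\log^4 n)$ initialization per insertion.

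Second, your query routine does not run in $O(t)$ time. The meta-graph $H$ has $O(t)$ vertices but up to $\Theta(t^2)$ certified edges, so a BFS costs $O(t^2)$; and the proposed remedy of incrementally maintaining the transitive closure of $H$ fails because $H$ is not incremental -- an SSR that certified $y_j\leadsto x_{j+1}$ can stop certifying it after a deletion, so edges of $H$ disappear over time. The paper avoids the meta-graph entirely by storing, for each inserted center $r$, \emph{both} an out-SSR and an in-SSR (Theorem \ref{thm:SCCmain} run on the reversed graph) on the graph as it looked when $r$ was inserted. Then for a witness path from $u$ to $v$ one takes $r$ to be the \emph{last-inserted} vertex of $I$ on the path: every vertex and edge of the path was already present when $r$'s structures were initialized and has not been deleted since, so the in-SSR at $r$ certifies $u\leadsto r$ and the out-SSR certifies $r\leadsto v$. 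This reduces the query to $O(|I|)=O(t)$ constant-time lookups with no path decomposition into multiple segments. (Your correctness decomposition into segments is sound as far as it goes, but without in-structures you are forced into the quadratic meta-graph; note also that the out-SSR rooted at $u\in S$ cannot even answer ``does $u$ reach $r$'' for an inserted $r$, since $r$ is absent from the epoch-start graph.) Finally, be explicit that epochs are delimited by $t$ vertex insertions, not by $t$ edge-level updates, so that the amortized bound is per operation as stated in the theorem.
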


The algorithm to implement theorem \ref{thm:APReach} can be described as follows: Initiate the data structure by constructing a decremental SSR structure as described in theorem \ref{thm:SCCmain} from each vertex $s \in S$. Initialize a set $I$ to be the empty set. Consider an update: if a deletion occurs, remove the edges incident to $u$ from each SSR structure, if an insertion occurs reinitialize the SSR structures at $u$ on the most recent graph $G$ and add $u$ to $I$. If $|I| \geq t$, reinitialize the entire data structure. On query for $(u,v)$, $u \in S$, check the SSR structure at $u$ whether it reaches $v$. Then check for each $r$ in $I$ whether $u$ reaches $r$ and $r$ reaches $v$. If any of these queries is successful, return that $u$ reaches $v$. Otherwise, return that $u$ cannot reach $v$. It is straight-forward to argue for correctness, and the running time is attained since the SSR structure require $O(|S|m \log^4{n})$ total time which can be amortized over $t$ updates. Each insertion additionally requires time $O(m \log^4{n})$ to (re)initialize the SSR structure at $u$. The query time is clearly $O(t)$ since we check reachability from and to each $r \in I$ in constant time. The theorem follows.

\section{Proof of Lemma \ref{lma:SimpleGES}}
\label{sec:proofsimpleges}
\begin{proof}
We define the edge weight $w(u,v)$ of an edge $(u,v)$ to be $w(u,v) = 1$ if $(u,v) \in E_{out}(S)$ and $w(u,v) = 0$ otherwise. On invocation of $\textsc{InitGES}(r, G, S, \delta)$, we find a shortest-path tree $T$ to depth $\delta$ from $r$ in $G$ by running a weighted BFS. Each edge in $T$ is called a \textit{tree edge} and each edge in $E \setminus T$ is a \textit{non-tree edge}. We assign each vertex $v \in V$, a distance label $l(v)$ that is initialized to $\mathbf{dist}_G(r,v)$. We also maintain with each vertex a set $\textsc{LevelEdges}(v)$ that is initialized to $E_{in}(v) \setminus E_{out}(v)$ that is all edges in $E_{in}(v)$ after removing self-loops.

Now, consider an update of the form $\textsc{Delete}(u,v)$. If $(u,v) \not\in T$, then simply remove $(u,v)$ from $G$ and $\textsc{LevelEdges}(v)$ and return since the shortest-path tree $T$ did not change. If $(u,v) \in T$, we remove $(u,v)$ from $T$ and add $v$ to the min-queue $Q$ with value $l(v)$. We let $Q.\textsc{Min}()$ return the minimum value of any element in $Q$ and if $Q$ is empty, we let $Q.\textsc{Min}() = \infty$.

Whilst $Q.\textsc{Min}() < \delta + 1$, we obtain the vertex $v = Q.\textsc{MinElement}()$. We then iteratively extract an edge $(y,v)$ from $\textsc{LevelEdges}(v)$ and check if $l(v) = l(y) + w(y,v)$. We stop if we found such an edge or if $\textsc{LevelEdges}(v)$ is empty. In the former case, we add $(y,v)$ to $T$ and remove $v$ from $Q$ since $T$ contains a new path of length $l(v)$ from $r$ to $v$. In the latter case, we know that $v$ cannot be reconnected in $T$ at the same distance. We therefore increase the level $l(v)$ by one. We then remove all edges $(v, z) \in T \cap E_{out}(v)$ from $T$ and add each $z$ to the queue $Q$ with its current level $l(z)$ as value. We also update the level of $v$ in $Q$ and if $l(v) \leq \delta$, we add all edges in $E_{in}(v) \setminus E_{out}(v)$ again to $\textsc{LevelEdges}(v)$. Once $Q.\textsc{Min}() \geq \delta + 1$, we remove all remaining vertices from $Q$, set their distance level to $\infty$ and remove their incident edges from the graph $G$.

We observe that the queue $Q$ can be implemented with constant operation time using two lists since we never add a vertex with level larger than the current min-elements level plus one. Clearly, we check each edge $(y,v)$ in $\textsc{LevelEdges}(v)$ once for any level of $v$ and the condition $l(v) = l(y) + w(y,v)$ can be checked in constant time. The set $\textsc{LevelEdges}(v)$ partitions the edges and since removing edges in $E_{out}(v) \cap T$ if $l(v)$ increases can be amortized over edge checks for each disconnected vertex, we conclude that the total running time can be bound by $O(\sum_{v \in V} E_{in}(v) \delta) = O(m \delta)$.

Let us now argue for correctness of the algorithm. We observe that we only delete an edge $(y,v)$ from $T$ if $l(v) > l(y) + w(y,v)$ which happens if $l(y)$ is increased or $(y,v)$ is removed. After removing $(y,v)$, we take the first edge $(u, v)$ in $\textsc{LevelEdges}(v)$ that reconnects at the same level $l(v)$, i.e. if $l(v) = l(u) + w(u,v)$. Since the levels only increase over time and edge weights are fixed, we have that once $l(v) < l(u) + w(u,v)$, we cannot use $(u,v)$ to reconnect $v$ at the same level. Thus, if $\textsc{LevelEdges}(v)$ is empty, we certified that $v$ cannot be reconnected at level $l(v)$. We also observe that after the loop to fix a vertex in $Q$ finished, that for all edges in $T$, $l(v) = l(y) + w(y,v)$, because we remove an edge $(y,v)$ from $T$ if $l(y)$ increases. Now assume that when $Q.\textsc{Min}() = \infty$, that $T$ is always a tree of the vertices in $V$ that have level $< \infty$. It follows that $T$ is a shortest-path tree by the discussion above. To see that $T$ is indeed always a tree, observe that since $l(v) = l(y) + w(y,v)$ is always enforced on edges in $T$ and edge weights are non-negative, that we have for any two vertices $y,v \in V$ on a common cycle that $l(y) = l(v)$, implying that all edge weights are $0$. But since the edge weights of $E_{out}(S)$ are set to $1$, this implies that the cycle does not contain a vertex in $S$. But this contradicts the assumption that $S$ is a feedback vertex set.

Clearly, the operation $\textsc{Distance}(r,v)$ can be implemented by returning $l(v)$ in constant time. To obtain $\textsc{Distance}(v,r)$, we also construct our data structure on the reversed edge set of $G$ from $r$. To support $\textsc{Delete}(V')$, we simply delete all incident edges from vertices in $V'$ using $\textsc{Delete}(u,v)$ and remove all vertices $V'$ from $V$. The operation $\textsc{GetUnreachableVertex}()$ can be implemented by keeping a list of all vertices with level $\infty$.
\end{proof}

\section{Proof of lemma \ref{lma:AugmentedGES}}
\label{sec:proofAugmentedGES}

\begin{proof}
Let us first describe how to implement the operations efficiently. We maintain additional pointers from each vertex $v \in V$ to the set $X \in \hat{V}$ with $v \in X$ and store with each vertex in $\hat{V}$ the number of vertices it represents. For $\textsc{SplitVertex}(X)$, we use any vertex in $x \in X$ to locate the vertex $Y \in \hat{V}$ with $X \subseteq Y$. If $X= Y$, we return. Otherwise, we compare the sizes of $X$ and $Y \setminus X$ and if $X$ has larger cardinality, we invoke $\textsc{SplitVertex}(Y \setminus X)$ and return. Let us therefore assume that $X$ is smaller than $Y \setminus X$. Then, we create a new vertex $X$ in $\hat{G}$ and move all edges in $E(X)$ from $E(Y)$. In particular, we also let the set $\textsc{LevelEdge}(X) = \textsc{LevelEdges}(Y) \cap E_{in}(X)$. We might also move the tree edge in $T$ depending on whether its endpoint maps to $X$ or $Y \setminus X$. Finally, we set $l(X)$ to $l(Y)$, remove the vertices in $X$ from $Y$, add the vertices in $\hat{V}$ that have no in-coming tree edge and do not contain $r$ to the queue $Q$ and invoke the fixing process described in the preceding lemma. 

Clearly, the operation of moving the edges from $Y$ to $X$ takes time $O(E(X))$. Since each time an edge is moved, at least one of its endpoints underlying vertex set halves, we have that each edge is moved at most $O(\log X)$ times, where $X$ is the initial endpoint of larger cardinality. The second part of the upper bound of the lemma follows. 

It is also straight-forward to see that after shifting the edges, the level of the vertices $X$ and $Y \setminus X$ is at least $l(Y)$ and that each edge in $E_{in}(Y) \setminus \textsc{LevelEdges}(Y)$ was already certified to not reconnect $X$ or $Y \setminus X$ to level $l(Y)$. Thus, the fixing process is simply part of the standard fixing process and can be included in the analysis presented for the last lemma. 

Finally, we implement the operation $\textsc{Augment}(S')$ by increasing the weight of each edge $E_{out}(S' \setminus S)$ to 1. We then remove each tree edge $(s', v) \in T \cap E_{out}(S' \setminus S)$ from $T$; add each such $v$ to $Q$ and invoke the fixing procedure. It is straight-forward to show that the costs of this procedure are subsumed by the fixing procedure.
\end{proof}

\section{Proof of Lemma \ref{lma:sep}}
\label{sec:ProofLemmaSep}

\begin{proof}
Let an out-layer $L_{out}(r, G, S, i)$ be defined to the vertices that are at distance exactly $i$ from $r$, i.e. $L_{out}(r, G, S, i) = \{ v \in V | \mathbf{dist}_G(r, v , S) = i\}$. Thus, $L_{out}(r, G, S, i) \cap L_{out}(r, G, S, j) = \emptyset$ for $i \neq j$. We can compute $L_{out}(r, G, S, 0)$ in time $O(E_{out}(L_{out}(r, G, S, 0)))$ time, by initializing a queue $Q_0$ with element $r$ and then iteratively extracting a vertex $v$ from $Q_0$ and adding for each $(v,w) \in E_{out}(v)$ the vertex $w$ to $Q_0$ if $w \not\in S$ and if $w$ was not visited by the algorithm before. Initializing the queue $Q_{i+1}$ with the vertex set $L_{out}(r, G, S, i) \cap S$, we can use the same strategy for computing $L_{out}(r, G, S, i+1)$ in time $O( E_{out}(L_{out}(r, G, S, i)) + E_{out}(L_{out}(r, G, S, i+1)))$. This corresponds to running a BFS on an edge set with weights in $\{0, 1\}$. Upon finding the first $i$ such that 
\[
|L_i \cap S| \leq \frac{ \min\{\sum_{j < i} |L_j \cap S|,  |(V \setminus \sum_{j < i}  L_j) \cap S|\} 2\log n}{d}
\]
is satisfied, and then output $S_{Sep} = L_i \cap S$ and $V_{Sep} = \cup_{j \leq i} L_i \setminus S_{Sep}$. Since we never extract the vertices in $S_{Sep}$ from the queue but only look at the out-going edges from $V_{Sep}$ we conclude that the procedure takes time $O(E_{out}(V_{Sep}))$.

To see, that there exists at least one layer $i$ satisfying this inequality, let layer $i'$ be the minimal layer such that $\sum_{j < i'} |L_j \cap S| > \frac{1}{2}|S|$. If $i' \geq d/2$, then we claim there exists at least one  layer $i \leq i'$ with 
\[
|L_i \cap S| \leq \frac{ \sum_{j < i} |L_j \cap S| 2\log n}{d}
\]
Otherwise, i.e. if $i' \geq d/2$, there we claim that there exists at least one layer $i \leq i'$ with 
\[
|L_i \cap S| \leq \frac{ \sum_{j < i} |L_j \cap S| 2\log n}{d}
\]
Let us assume the former case. Consider for the sake of contradiction that we cannot find an appropriate $i \leq d/2$, i.e. $|L_i \cap S| > \frac{ \sum_{j < i} |L_j \cap S|\log{n}}{d}$ for any $i$. Then, we have $\sum_{j < i} |L_j \cap S| > (1+\frac{2\log{n}}{d})^i$ since the set grows by factor $(1+\frac{2\log{n}}{d})$ with each iteration and it would be trivially satisfied if $L_0 \cap S = \emptyset$. But then the final set must be of size greater than $n = e^{\log{n}} \leq (1+\frac{2\log{n}}{d})^{d/2}$, thus establishing a contradiction. The case where $i' \geq d/2$ is symmetric.

For point 4. in our lemma, we observe that removing a layer implies that the layers visited can not reach the layers not visited yet. Assume for contradiction that this is not true, then we have an edge $(x,y)$ with $x \in V_{Sep}$ and $y \in V \setminus (S_{Sep} \cup V_{Sep})$ but then $y$ has to be in the same or the next layer that $x$ is contained. In the first case, $y \in V_{Sep}$ which contradicts that $y \in V \setminus (S_{Sep} \cup V_{Sep})$. In the latter case, Either $y \in V_{Sep}$ or $y$ is in the removed layer $S_{Sep}$ but both contradicts again that $y \in V \setminus (S_{Sep} \cup V_{Sep})$.
\end{proof}

\end{document}